\newtheorem{theorem}{Theorem}
\newtheorem{definition}{Definition}
\newtheorem{corollary}{Corollary}
\DeclareMathAlphabet{\mathbit}{OML}{cmr}{bx}{it}
\DeclareMathAlphabet{\mathsf}{OT1}{cmss}{m}{n}
\DeclareMathAlphabet{\mathTXf}{OT1}{cmss}{bx}{it}
\DeclareMathOperator{\diag}{diag}
\DeclareMathOperator*{\argmin}{argmin}
\DeclareMathOperator{\IA}{IA}
\DeclareMathOperator{\RX}{RX}
\DeclareMathOperator{\TX}{TX}
\DeclareMathOperator{\CN}{\mathcal{N}_{\mathbb{C}}}
\DeclareMathOperator{\NT}{NT}
\DeclareMathOperator{\Feas}{feas}
\DeclareMathOperator{\Var}{var} 
\DeclareMathOperator{\Comp}{comp}
\DeclareMathOperator{\Eq}{eq}  
\DeclareMathOperator{\Eff}{eff}  
\DeclareMathOperator{\Tot}{tot}  
\DeclareMathOperator{\Size}{s}   
\DeclareMathOperator{\Fix}{fix}
\newcommand{\bA}{\mathbf{A}}
\newcommand{\bH}{\mathbf{H}}
\newcommand{\bt}{\bm{t}}
\newcommand{\LB}{\left(}
\newcommand{\RB}{\right)}
\newcommand{\I}{\mathbf{I}} %identity matrix
\newcommand{\norm}[1]{\lVert{#1}\rVert}
\newcommand{\Fro}{{\mathrm{F}}}
\newcommand{\trans}{{\text{T}}} 
\newcommand{\He}{{{\mathrm{H}}}}
\theoremstyle{remark}
\newtheorem{remark}{Remark} %[section]
\theoremstyle{example}
\newtheorem{example}{Example} %[section]
\begin{document} 
\title{Interference Alignment with Incomplete CSIT Sharing}
\author{Paul de Kerret and David Gesbert\\Mobile Communications Department, Eurecom\\
Campus SophiaTech, 450 Route des Chappes, 06410 Biot, France\\\{dekerret,gesbert\}@eurecom.fr}

\maketitle

%% \doublespacing

\begin{abstract}
In this work\footnote{This work has been performed in the framework of the European research project SHARING, which is partly funded by the European Union under its FP7 ICT Objective 1.1. Some preliminary results have been published in \cite{dekerret2012_ISWCS}.} we study the impact of having only \emph{incomplete} channel state information at the transmitters (CSIT) over the feasibility of interference alignment (IA) in a $K$-user MIMO interference channel (IC). Incompleteness of CSIT refers to the perfect knowledge at each transmitter (TX) of only a sub-matrix of the global channel matrix, where the sub-matrix is specific to each TX. This paper investigates the notion of IA feasibility for CSIT configurations being as incomplete as possible, as this leads to feedback overhead reductions in practice. We distinguish between antenna configurations where (i) removing a single antenna makes IA unfeasible, referred to as \emph{tightly-feasible} settings, and (ii) cases where extra antennas are available, referred to as \emph{super-feasible} settings. We show conditions for which IA is feasible in strictly incomplete CSIT scenarios, \emph{even in tightly-feasible settings}. For such cases, we provide a CSIT allocation policy preserving IA feasibility while reducing significantly the amount of CSIT required. For super-feasible settings, we develop a heuristic CSIT allocation algorithm which exploits the additional antennas to further reduce the size of the CSIT allocation. As a byproduct of our approach, a simple and intuitive algorithm for testing feasibility of single stream IA is provided.
\end{abstract}

\IEEEpeerreviewmaketitle

 %Most IA techniques are developed under a complete CSIT assumption -either explicitly or implicitly when the CSI is progressively acquired in the form of RX-to-TX feedback iterations. 

\begin{keywords}
Interference alignment, Interference Channel, Channel State Information, Degrees-of-Freedom 
\end{keywords}

\section{Introduction}\label{se:Intro}
 
Although multi-transmitter coordinated transmission such as interference alignment (IA) \cite{MaddahAli2008,Cadambe2008} constitutes a promising tool to combat interference, coordination benefits come at the expense of acquiring accurate enough channel state information (CSI) at the transmitters (TXs) and sharing it across all TXs whether explicitly or implicitly\cite{Gesbert2010}. In the case of multi-antenna based IA without channel extension (i.e., when the transmission schemes are not spread across multiple time slots), which is the focus of this work, some form of CSI at the TXs (CSIT) is required to compute the precoders at each one of the TXs and can result in a significant overhead in practice.

The IA literature for static MIMO channels is rich in methods improving the efficiency of the precoding schemes at finite SNR and reducing the complexity of the algorithms\cite{Gomadam2008,Schmidt2009,Kumar2010,Santamaria2010,Papailiopoulos2010,Peters2012}. Yet, obtaining the CSIT at the TXs represents a major obstacle to their practical use \cite{Garcia2011}. Thus, the study of how CSIT requirements can somehow be alleviated has become an active research topic in its own right~\cite{Gomadam2008,Tresch2009,Thukral2009,Krishnamachari2010,Schmidt2009,Shi2010,ElAyach2012a,ElAyach2012b,Lee2010,Schreck2013}. Several approaches have been proposed in this direction and are briefly summarized below. 

One strategy consists in developing iterative methods that can exploit local measurements made by the TXs on the reverse link or progressive feedback mechanisms \cite{Gomadam2008,Shi2010,Schmidt2009}. Such methods rely on the fact that, through iterations, enough CSIT is acquired to allow convergence in a distributed manner toward a global IA solution. In \cite{Cho2012}, the amount of information exchanged between the TXs is reduced by letting some TXs compute their precoder and share it instead of sharing the CSI. Yet, this is obtained at the cost of an increased delay because the improvement is obtained by letting the TXs successively compute their precoders. Also, the scheme described in \cite{Cho2012} is only applicable in some particular antenna configurations. In \cite{Suh2008,Suh2011}, IA is adapted to the configurations of cellular networks. In \cite{Lee2010}, multi-user diversity is exploited to obtain approximately aligned interference without the requirement of full CSIT. In \cite{Peters2012}, the trade-off between serving jointly all the users via IA in a large IC or serving the users orthogonally in different frequency bands is investigated. To reduce the overhead due to the CSI feedback, an intermediate solution is found where the IC is split into smaller ICs to improve the overall efficiency of the transmission scheme. In \cite{Thukral2009,Krishnamachari2010}, the number of CSI quantization bits that is sufficient to achieve the optimal number of degrees of freedom (DoF) under IA is provided.
 
Another major question regarding IA in static MIMO channels is the \emph{feasibility problem}, i.e., to determine whether the antenna configuration at the TXs and the receivers (RXs) allows the interference-free transmission of all data streams to all users. This problem was first investigated in \cite{Yetis2010} by counting the number of variables available for beamforming and the number of IA equations to obtain necessary conditions for IA feasibility. Since then, the understanding of this problem has significantly improved thanks to the use of algebraic geometry: A necessary and sufficient condition for IA feasibility when all TXs and RXs have the same number of antennas is given in \cite{Bresler2011b} and is extended in \cite{Razaviyayn2012,Gonzalez2012b,Ruan2013} to the general case.

Yet, in all these works on IA feasibility, it is assumed that every TX knows perfectly the full multi-user channel, which we define as the channel from all the TXs to all the RXs. This assumption is critical as the maximal DoF is known to be significantly lower in the absence of CSIT \cite{Huang2012}. However, a simple examination of IA achievability in particular cases of IC reveals that \emph{how much} CSIT is required at any one TX actually depends on the antenna configuration. Obvious examples include TXs with single antenna which has no alignment capability, hence requires no CSIT, or an IC with many-antenna RXs which eliminates the need for any alignment, hence CSIT. An interesting question is whether interference can be aligned in arbitrary heterogeneous antenna configurations with some TXs having only access to a subset the channel coefficients. To this end, one needs to revisit the IA feasibility question under the prism of CSIT. Note that parallel to this work, the trade-off between the CSI requirements and the DoF has been investigated in \cite{Rao2013}, however in the case of \emph{centralized precoding}. We focus in this work on a completely different problem which is the \emph{incomplete CSIT sharing}, which means that \emph{each TX} receives its \emph{own} CSIT. In particular, we will exploit the fact that some channel coeffients are only known at a subset of TXs, which cannot be considered in \cite{Rao2013}.

To explore this new problem, we introduce a novel CSIT framework whereby CSIT is no longer uniform across TXs. We then say that an IC with $K$~users has \emph{incomplete} CSIT when each of the $K$~TXs acquires, through a given feedback and exchange mechanism left to be specified, a \emph{subset} of the multi-user channel coefficients, with this subset being generally TX-dependent. In this framework, we define the \emph{size} of a CSIT allocation as the total number of scalars forming the CSIT subsets known at the TXs. The main goal of this paper is then to investigate what is the minimal CSIT allocation preserving IA feasibility. We focus here on the single-stream transmission as the feasibility problem of the general case is in itself challenging~\cite{Razaviyayn2012,Ruan2013}.

Specifically, our main contributions are as follows. 
\begin{itemize}
\item We formulate the problem of finding the CSIT allocation of minimal size which preserves IA feasibility. We show conditions under which IA is feasible with strictly incomplete CSIT.
\item For \emph{tightly-feasible} ICs, we present a CSIT allocation policy to the various TXs which preserves IA feasibility while reducing significantly the size of the CSIT feedback.
\item For \emph{super-feasible} ICs, we show the existence of a trade-off between the number of antennas and the CSIT requirements. We provide a heuristic algorithm exploiting any additional antenna to reduce further the size of the CSIT allocation. The code and detailed description of the algorithm are available online \cite{deKerret_online}.
\item As a byproduct of our approach, we develop a new simple and intuitive algorithm for testing the feasibility of single-stream IA. Note that the code for the algorithm is published for convenience in \cite{deKerret_online}.
\end{itemize}
\paragraph* {Notations} We denote the Hadamard (or element-wise product) by the operator~$\odot$ and the Frobenius norm of a matrix~$\bA$ by $\|\bA\|_{\Fro}$. The matrix~$\bm{1}_{n\times m}$ (resp.~$\bm{0}_{n\times m}$) is the matrix made of $n$~rows and $m$~columns with all its elements equal to~$1$ (resp. $0$). We denote the identity matrix of size~$K$ by $\I_K$. We also define the set $\mathcal{K}\triangleq \{1,\ldots,K\}$. The operator $\text{eig}_{\min}\left(\bullet\right)$ returns the eigenvector corresponding to the smallest eigenvalue of the matrix taken as argument. A set containing the elements~$a_1,\ldots,a_n$ is represented by writting its elements inside brackets~$\{a_1,\ldots,a_n\}$. The minus set operator is written $\setminus$ and $|\mathcal{S}|$ is the the number of elements in the set~$\mathcal{S}$. $\CN(0,\sigma^2)$ denotes the zero mean complex circularly symmetric Gaussian distribution of variance~$\sigma^2$. We write \emph{``w.l.o.g."} for \emph{``without loss of generality"} and \emph{``i.i.d."} for \emph{``independent and identically distributed"}.
% Let us consider the matrix~$\mathbf{A}$, then $\tilde{\mathbf{A}}$ is equal to the matrix~$\mathbf{A}$ whose rows and columns containing only elements equal to zero have been removed.
  
%%%%%%%%%%%%%%%%%%%%%%%%%%%%%%%%%%%%%%%%%%%%%%%%%%%%%%
%%%%%%%%%%%%%%%%%%%%%%%%%%%%%%%%%%%%%%%%%%%%%%%%%%%%%%
%%%%%%%%%%%%%%%%%%%%%%%%%%%%%%%%%%%%%%%%%%%%%%%%%%%%%%
%%%%%%%%%%%%%%%%%%%%%%%%%%%%%%%%%%%%%%%%%%%%%%%%%%%%%%
\section{System Model}\label{se:SM}
%%%%%%%%%%%%%%%%%%%%%%%%%%%%%%%%%%%%%%%%%%%%%%%%%%%%%%
%%%%%%%%%%%%%%%%%%%%%%%%%%%%%%%%%%%%%%%%%%%%%%%%%%%%%%
%%%%%%%%%%%%%%%%%%%%%%%%%%%%%%%%%%%%%%%%%%%%%%%%%%%%%%
%%%%%%%%%%%%%%%%%%%%%%%%%%%%%%%%%%%%%%%%%%%%%%%%%%%%%%
%%%%%%%%%%%%%%%%%%%%%%%%%%%%%%%%%%%%%%%%%%%%%%%%%%%%%% 
% We define~$\mathcal{K}$ as the set of users~$\mathcal{K}\triangleq=\{1,\ldots,K\}$.

%%%%%%%%%%%%%%%%%%%%%%%%%%%%%%%%%%%%%%%%%%%%%%%%%%%%%%
%%%%%%%%%%%%%%%%%%%%%%%%%%%%%%%%%%%%%%%%%%%%%%%%%%%%%%
\subsection{MIMO Interference Channel}\label{se:SM:IC}
%%%%%%%%%%%%%%%%%%%%%%%%%%%%%%%%%%%%%%%%%%%%%%%%%%%%%%
%%%%%%%%%%%%%%%%%%%%%%%%%%%%%%%%%%%%%%%%%%%%%%%%%%%%%%
We study the transmission in a $K$-user MIMO IC where all the RXs and the TXs are linked by a wireless channel. We consider a conventional channel model with the particularity of our model lying in the structure of the CSIT. We consider that each TX has its \emph{own} CSIT in the form of a sub-matrix of the multi-user channel matrix. In this paper, this specific information structure is referred to as \emph{incomplete CSIT} and will be detailed in Subsection~\ref{se:SM:Incomplete}. TX~$j$ is equipped with $M_j$~antennas, RX~$i$ has $N_i$~antennas, and TX~$j$ transmits a single stream to RX~$j$. This IC is then denoted as $[\prod_{k=1}^K(N_k,M_k)]$. We consider exclusively single-stream transmissions and the extension to multiple streams will be discussed later in this work. When all the TXs and all the RXs have the same (resp. different) number of antennas ,i.e., $[(N,M)^K]$, we say that the antenna configuration is~\emph{homogeneous}  (resp. \emph{heterogeneous}).

The channel from TX~$j$ to RX~$i$ is represented by the channel matrix~$\mathbf{H}_{ij}\in \mathbb{C}^{N_i\times M_j}$ with its elements i.i.d. according to a continuous probability distribution to ensure that all the channel matrices are almost surely full rank. The global multi-user channel matrix is denoted by $\mathbf{H}\in \mathbb{C}^{N_{\Tot}\times M_{\Tot}}$ where $N_{\Tot}\triangleq \sum_{k=1}^K N_k$ and $M_{\Tot}\triangleq \sum_{k=1}^K M_k$:
\begin{equation}
\mathbf{H}\triangleq\begin{bmatrix}
\mathbf{H}_{11}&\mathbf{H}_{12}&\ldots&\mathbf{H}_{1K}\\
\mathbf{H}_{21}&\mathbf{H}_{22}&\ldots&\mathbf{H}_{2K}\\
\vdots&\vdots&\ddots&\vdots\\
\mathbf{H}_{K1}&\mathbf{H}_{K2}&\ldots&\mathbf{H}_{KK}
\end{bmatrix}.
\label{eq:SM_1}
\end{equation}
TX~$i$ uses the unit-norm TX beamformer~$\bm{t}_i\in\mathbb{C}^{M_i\times 1}$ multiplied by~$\sqrt{P}$, where $P$ is the transmit power per-TX, to transmit the data symbol $s_i$ (i.i.d. $\CN(0,1)$) to RX~$i$. The received signal $\bm{y}_i\in \mathbb{C}^{N_i\times 1}$ at the $i$-th RX reads then as
\begin{equation}
\bm{y}_i=\sqrt{P}\mathbf{H}_{ii}\bm{t}_i s_i+\sqrt{P}\sum_{j=1,j\neq i}^K \mathbf{H}_{ij}\bm{t}_j s_j+\bm{\eta}_i
\label{eq:SM_2}
\end{equation}
where $\bm{\eta}_i\in \mathbb{C}^{N_i\times 1}$ is the normalized noise at RX~$i$ and is i.i.d. $\CN(0,1)$. The received signal $\bm{y}_i$ is then processed by a RX filter~$\bm{g}_i^{\He}\in \mathbb{C}^{1\times N_i}$ to obtain an estimate of the data symbol~$s_i$.

Our analysis deals with the achievability of IA which means that the desired signal should be decoded free of interference at each RX. Equivalently, the RX beamformer~$\bm{g}_i^{\He}$ should be able to zero-force (ZF) all the received interference which means fulfilling for all the interferers~$j\neq i$
\begin{equation}
\bm{g}_i^{\He}\mathbf{H}_{ij}\bm{t}_j=0.
\label{eq:SM_3}
\end{equation} 
Thus, IA is feasible if the constraint~\eqref{eq:SM_3} can be achieved at all the RXs for all the interfering streams. Note that this is equivalent to having the interference subspace at RX~$i$ span at most~$N_i-1$ dimensions.

%%%%%%%%%%%%%%%%%%%%%%%%%%%%%%%%%%%%%%%%%%%%%%%%%%%%%%
%%%%%%%%%%%%%%%%%%%%%%%%%%%%%%%%%%%%%%%%%%%%%%%%%%%%%%
\subsection{Feasibility Results}\label{se:SM:feasibility}
%%%%%%%%%%%%%%%%%%%%%%%%%%%%%%%%%%%%%%%%%%%%%%%%%%%%%%
%%%%%%%%%%%%%%%%%%%%%%%%%%%%%%%%%%%%%%%%%%%%%%%%%%%%%%

%%%%%%%%%%%%%%%%%%%%%%%%%%%%%%%%%%%%%%%%%%%%%%%%%%%%%%
\subsubsection{Results from the literature}
%%%%%%%%%%%%%%%%%%%%%%%%%%%%%%%%%%%%%%%%%%%%%%%%%%%%%%

We start by recalling some results from the literature on IA feasibility in a conventional IC with full CSIT sharing for the case of single stream transmission. In \cite{Yetis2010}, the notion of \emph{proper} antenna configurations in introduced. An IC is said to be proper if and only if the number of variables in the RX and TX beamformers involved in any set of IA constraints is larger than the number of scalar equations. Following \cite{Yetis2010}, let us denote by $\mathrm{E}_{ij}$ the IA equation \eqref{eq:SM_3} and by $\Var(\mathrm{E}_{ij})$ the set of free variables involved in this equation. It holds then
\begin{equation}
|\Var(\mathrm{E}_{ij})|=N_i-1+M_j-1.
\label{eq:SM_4}
\end{equation}
A system is said to be proper if and only if
\begin{equation}
|\mathcal{I}|\leq \big|\bigcup_{(i,j)\in \mathcal{I}}\Var(\mathrm{E}_{ij})\big|,\qquad \forall \mathcal{I}\subseteq \mathcal{J}
\label{eq:theorem_feasibility}
\end{equation}
where $\mathcal{J}\triangleq\{(i,j)|1\leq i,j\leq K, i\neq j\}$ and ~$\mathcal{I}$ is an arbitrary subset of~$\mathcal{J}$. In the homogeneous $[(N,M)^K]$ IC, this condition can be reduced to $M+N\geq K+1$. 
The following result has been later obtained in \cite{Razaviyayn2012} and is restated here for convenience.
\begin{theorem}[\cite{Razaviyayn2012}]
\label{theorem_raza}
IA is feasible in the $[\prod_{k=1}^K(N_k,M_k)]$~IC if and only if the antenna configuration is proper, i.e., if and only if equation~\eqref{eq:theorem_feasibility} is verified.
\end{theorem}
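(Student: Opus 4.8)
\emph{Plan.} Theorem~\ref{theorem_raza} is an equivalence, and I would establish the two implications separately: ``feasible $\Rightarrow$ proper'' by an elementary dimension count (essentially the argument of \cite{Yetis2010,Bresler2011b}), and ``proper $\Rightarrow$ feasible'' by an algebraic-geometric argument, which is where the real content lies. Throughout I would work over $\mathbb{C}$ and, since this leaves feasibility unchanged, replace the conjugate-linear constraints~\eqref{eq:SM_3} by the genuinely bilinear (holomorphic) equations $\bg_i^{\Tr}\bH_{ij}\bt_j=0$ for $(i,j)\in\mathcal{J}$, viewing $\bt_j,\bg_i$ as points of the projective spaces $\mathbb{P}^{M_j-1},\mathbb{P}^{N_i-1}$. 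Write $\mathcal{H}$ for the channel space, $\mathcal{B}\triangleq\prod_{j\in\mathcal{K}}\mathbb{P}^{M_j-1}\times\prod_{i\in\mathcal{K}}\mathbb{P}^{N_i-1}$, and $\mathcal{V}\subseteq\mathcal{H}\times\mathcal{B}$ for the incidence variety of tuples $(\bH,\{\bt_j\},\{\bg_i\})$ satisfying all the IA equations, with the two projections $\pi_1:\mathcal{V}\to\mathcal{H}$, $\pi_2:\mathcal{V}\to\mathcal{B}$. The one fact underlying everything is that, for fixed beamformers, the $(i,j)$-th equation is a linear form in the entries of the block $\bH_{ij}$ with coefficient vector $\bg_i\otimes\bt_j$ (nonvanishing on $\mathcal{B}$), and that distinct pairs $(i,j)$ touch disjoint blocks; hence every fibre $\pi_2^{-1}(b)$ is a linear subspace of $\mathcal{H}$ of codimension exactly $|\mathcal{J}|=K(K-1)$. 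Consequently $\mathcal{V}$ is irreducible of dimension $\dim\mathcal{B}+\dim\mathcal{H}-K(K-1)$, and IA is feasible for generic $\bH$ iff $\pi_1$ is dominant.

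\emph{Necessity.} If the configuration is not proper, fix $\mathcal{I}\subseteq\mathcal{J}$ with $|\mathcal{I}|>n_{\mathcal{I}}$, where $n_{\mathcal{I}}\triangleq\big|\bigcup_{(i,j)\in\mathcal{I}}\Var(\mathrm{E}_{ij})\big|$. Restrict everything to the sub-system indexed by $\mathcal{I}$: to the blocks $\{\bH_{ij}:(i,j)\in\mathcal{I}\}$, of total dimension $D_{\mathcal{I}}$, and to the beamformer coordinates occurring in those equations, spanning a space of dimension $n_{\mathcal{I}}$; call the resulting incidence variety $\mathcal{V}_{\mathcal{I}}$. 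By the fact above the fibre over fixed beamformers has codimension exactly $|\mathcal{I}|$ in block-space, so $\dim\mathcal{V}_{\mathcal{I}}\le n_{\mathcal{I}}+D_{\mathcal{I}}-|\mathcal{I}|<D_{\mathcal{I}}$; hence the projection of $\mathcal{V}_{\mathcal{I}}$ to block-space is not dominant and its image lies in a proper subvariety, a set of measure zero. Therefore for almost every $\bH$ the sub-system indexed by $\mathcal{I}$ already has no solution, and IA is infeasible.

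\emph{Sufficiency.} Since $\mathcal{V}$ and $\mathcal{H}$ are irreducible, $\pi_1$ is dominant as soon as $d\pi_1$ is surjective at one smooth point $p_0=(\bH_0,b_0)$ of $\mathcal{V}$, so the whole problem is to produce such a point. Linearising the equations at $p_0$, a tangent vector $(\dot{\bH},\dot b)$ of $\mathcal{V}$ must satisfy, for every $(i,j)\in\mathcal{J}$,
\begin{equation}
(\dot{\bg}_i)^{\Tr}\bH_{0,ij}\bt_{0,j}+\bg_{0,i}^{\Tr}\dot{\bH}_{ij}\bt_{0,j}+\bg_{0,i}^{\Tr}\bH_{0,ij}\dot{\bt}_j=0,
\label{eq:plan_lin}
\end{equation}
with $\dot{\bt}_j,\dot{\bg}_i$ ranging over the tangent spaces of $\mathbb{P}^{M_j-1},\mathbb{P}^{N_i-1}$. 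As the middle term $\bg_{0,i}^{\Tr}\dot{\bH}_{ij}\bt_{0,j}$ sweeps out all of $\mathbb{C}$ independently over $(i,j)$ when $\dot{\bH}$ varies, $d\pi_1$ is surjective at $p_0$ exactly when \eqref{eq:plan_lin} is solvable for $\dot b$ for every right-hand side, i.e.\ when the coefficient matrix $\mathbf{J}(\bH_0,b_0)$ of the $\dot b$-system has full row rank $K(K-1)$. Crucially $\mathbf{J}$ inherits the incidence structure of the feasibility conditions: its row $(i,j)$ is supported only on the $M_j-1$ coordinate slots of $\bt_j$ (through the covector $\bg_{0,i}^{\Tr}\bH_{0,ij}$) and the $N_i-1$ slots of $\bg_i$ (through $\bt_{0,j}^{\Tr}\bH_{0,ij}^{\Tr}$).

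\emph{The combinatorial core, and the main obstacle.} It remains to show that properness forces $\rank\mathbf{J}(\bH_0,b_0)=K(K-1)$ at a suitable $p_0\in\mathcal{V}$. Build the bipartite graph joining each $(i,j)\in\mathcal{J}$ to all coordinate slots of $\bt_j$ and of $\bg_i$; the neighbourhood of a set $\mathcal{I}$ is then $\bigcup_{(i,j)\in\mathcal{I}}\Var(\mathrm{E}_{ij})$, so condition~\eqref{eq:theorem_feasibility} is precisely Hall's marriage condition and yields a matching saturating $\mathcal{J}$. Taking the $K(K-1)$ columns selected by this matching produces a square submatrix of $\mathbf{J}$, and the crux is that its determinant is not the zero function on $\mathcal{V}$, so that $\mathbf{J}$ has full rank on a dense subset of $\mathcal{V}$ by the Schwartz--Zippel lemma. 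This non-vanishing is the main obstacle, because the entries of $\mathbf{J}$ are \emph{not} algebraically independent --- rows with a common $i$ share $\bg_{0,i}$, rows with a common $j$ share $\bt_{0,j}$, and on $\mathcal{V}$ the covectors are additionally tied to the beamformers by orthogonality --- so one cannot simply invoke a ``generic structured matrix has generic rank'' statement. I would deal with it by an explicit specialisation driven by Hall's theorem: repeatedly peel off an equation matched to a coordinate slot of some beamformer that no other still-unmatched equation touches, set the corresponding channel and beamformer entries so that this row of $\mathbf{J}$ has a single nonzero pivot, and recurse --- the delicate bookkeeping being to check that removing that equation together with its matched slot leaves a configuration that is still proper, so the induction goes through. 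The surviving (permuted) triangular submatrix then has a nonzero diagonal, giving the required non-vanishing; $d\pi_1$ is therefore surjective at a smooth point, $\pi_1$ is dominant, and IA is feasible for generic $\bH$, which closes the equivalence.
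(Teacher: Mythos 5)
The paper itself contains no proof of Theorem~\ref{theorem_raza}: it is imported verbatim from \cite{Razaviyayn2012} (``restated here for convenience''), so the only meaningful comparison is with that reference. Your framework is the right one and is essentially the route taken in the algebraic-geometry feasibility literature \cite{Razaviyayn2012,Gonzalez2012b}: bilinearise the ZF constraints, form the incidence variety $\mathcal{V}$ over the product of projective beamformer spaces, observe that the fibres of the projection to beamformer space are linear of codimension exactly $K(K-1)$ (the pairs $(i,j)$ touch disjoint channel blocks and $\bg_i\otimes\bt_j\neq 0$), deduce irreducibility and the dimension of $\mathcal{V}$, and identify generic feasibility with dominance of the projection onto channel space. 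Your necessity argument (restrict to a violating subset $\mathcal{I}$, count dimensions, conclude the image is a proper subvariety, hence measure zero) is complete and matches the standard converse.

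The gap is in the sufficiency half, exactly where you place the ``main obstacle,'' and as written it is a plan rather than a proof. You correctly reduce properness~$\Rightarrow$~dominance to exhibiting one point of $\mathcal{V}$ at which the Jacobian of the IA equations in the beamformer coordinates has full row rank $K(K-1)$, and you correctly observe that \eqref{eq:theorem_feasibility} is Hall's condition for the bipartite graph between equations and variable slots, so a saturating matching exists. But the decisive lemma --- that the corresponding $K(K-1)\times K(K-1)$ minor is not identically zero \emph{on $\mathcal{V}$} --- is only sketched. Two points there need real work: (i) your peeling step assumes that at every stage some remaining equation is matched to a slot touched by no other remaining equation; this is not automatic for an arbitrary Hall matching, so you need an exchange/deficiency argument (or a different elimination order) together with a verification that deleting an equation and its matched slot preserves properness of the residual system, with the $-1$ per projective normalisation in $\Var(\mathrm{E}_{ij})$ accounted for correctly; and (ii) the specialisation must yield a point that actually lies on $\mathcal{V}$, i.e.\ the channel blocks and beamformers chosen to make the minor triangular with nonzero pivots must simultaneously satisfy all $K(K-1)$ IA equations --- this is precisely the dependence between rows (shared $\bg_i$, shared $\bt_j$, and the orthogonality constraints) that you flag but do not resolve, and it is the technical core of the achievability proof in \cite{Razaviyayn2012}. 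Until that induction is carried out, the ``only if'' direction of the theorem is proved but the ``if'' direction is not.
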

Hence, we can use here the condition~\eqref{eq:theorem_feasibility} to determine the feasibility of IA with complete CSIT sharing.

%%%%%%%%%%%%%%%%%%%%%%%%%%%%%%%%%%%%%%%%%%%%%%%%%%%%%%
\subsubsection{Tightly-feasible and super-feasible settings}
%%%%%%%%%%%%%%%%%%%%%%%%%%%%%%%%%%%%%%%%%%%%%%%%%%%%%%
Whether the total number of variables is strictly larger than the number of equations will be shown to impact significantly the CSIT needed. Hence, we introduce the following definitions.
\begin{definition}
An IC setting is called \emph{tightly-feasible} if this IC is feasible and removing a single antenna at any TX or RX renders IA unfeasible. Equivalently, an IC is tightly-feasible if and only if it is \emph{feasible} and 
\begin{equation}
\sum_{i=1}^K N_i+M_i=K(K+1).
\label{eq:SM_11}
\end{equation} 
\end{definition}
The characterization follows directly from~\eqref{eq:theorem_feasibility} applied with the set~$\mathcal{I}=\mathcal{J}$. 
\begin{definition}
A feasible setting which does not satisfy the tightly-feasible condition is said to be \emph{super-feasible}. Equivalently, a super-feasible setting is a feasible setting such that
\begin{equation}
\sum_{i=1}^K N_i+M_i>K(K+1).
\label{eq:SM_12}
\end{equation}
\end{definition}

%%%%%%%%%%%%%%%%%%%%%%%%%%%%%%%%%%%%%%%%%%%%%%%%%%%%%%
\subsubsection{New formulation of the feasibility results}\label{se:SM:feasibility}
%%%%%%%%%%%%%%%%%%%%%%%%%%%%%%%%%%%%%%%%%%%%%%%%%%%%%%
Condition \eqref{eq:theorem_feasibility} requires satisfying a number of conditions increasing exponentially with the size of the network. As a preliminary result, we show that condition \eqref{eq:theorem_feasibility} can be simplified to obtain the following condition.

\begin{theorem}
\label{thm_feasibility}
IA is feasible in the $[\prod_{k=1}^K(N_k,M_k)]$~IC if and only if, for any TX subset~$\mathcal{S}_{\TX}$ and any RX subset~$\mathcal{S}_{\RX}$, it holds that
\begin{equation}
\mathcal{N}_{\Var}(\mathcal{S}_{\RX},\mathcal{S}_{\TX})\geq \mathcal{N}_{\Eq}(\mathcal{S}_{\RX},\mathcal{S}_{\TX}),\quad\forall \mathcal{S}_{\TX},\mathcal{S}_{\RX}\subseteq \mathcal{K}
\label{eq:SM_10}
\end{equation}
where~$\mathcal{K}\triangleq \{1,\ldots,K\}$ and $\mathcal{N}_{\Var}(\mathcal{S}_{\RX},\mathcal{S}_{\TX})$ and $\mathcal{N}_{\Eq}(\mathcal{S}_{\RX},\mathcal{S}_{\TX})$ are respectively the number of variables and the number of equations stemming from the subset of RXs~$\mathcal{S}_{\RX}$ and the subset of TXs~$\mathcal{S}_{\TX}$. They are mathematically defined as
\label{thm_feasibility}
\begin{equation}
\begin{aligned}
\mathcal{N}_{\Var}(\mathcal{S}_{\RX},\mathcal{S}_{\TX})&\triangleq \sum_{i\in \mathcal{S}_{\RX}}N_i-1+\sum_{i\in \mathcal	{S}_{\TX}}M_i-1,\\
\mathcal{N}_{\Eq}(\mathcal{S}_{\RX},\mathcal{S}_{\TX})&\triangleq \sum_{k\in \mathcal{S}_{\TX}}\sum_{j\in \mathcal{S}_{\RX},j\neq k}1.
\end{aligned}
\label{eq:SM_8}
\end{equation}   
\end{theorem}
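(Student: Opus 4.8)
The plan is to observe that, by Theorem~\ref{theorem_raza}, it suffices to prove that the ``proper'' condition~\eqref{eq:theorem_feasibility} is equivalent to the claimed condition~\eqref{eq:SM_10}; everything else is bookkeeping. The key structural fact I would use is that the variable set of a single IA equation splits node-wise: $\Var(\mathrm{E}_{ij})$ is the disjoint union of the $N_i-1$ free coordinates of the RX filter~$\bm{g}_i$ and the $M_j-1$ free coordinates of the TX beamformer~$\bm{t}_j$, and these coordinate blocks are pairwise disjoint across distinct RXs and TXs (cf.~\eqref{eq:SM_4}). Consequently, for any $\mathcal{I}\subseteq\mathcal{J}$, writing $\mathcal{S}_{\RX}(\mathcal{I})\triangleq\{i:(i,j)\in\mathcal{I}\text{ for some }j\}$ and $\mathcal{S}_{\TX}(\mathcal{I})\triangleq\{j:(i,j)\in\mathcal{I}\text{ for some }i\}$, the set $\bigcup_{(i,j)\in\mathcal{I}}\Var(\mathrm{E}_{ij})$ is exactly the union of the free coordinates of $\{\bm{g}_i\}_{i\in\mathcal{S}_{\RX}(\mathcal{I})}$ and $\{\bm{t}_j\}_{j\in\mathcal{S}_{\TX}(\mathcal{I})}$, so that $\big|\bigcup_{(i,j)\in\mathcal{I}}\Var(\mathrm{E}_{ij})\big|=\mathcal{N}_{\Var}(\mathcal{S}_{\RX}(\mathcal{I}),\mathcal{S}_{\TX}(\mathcal{I}))$. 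In words: the right-hand side of~\eqref{eq:theorem_feasibility} depends on~$\mathcal{I}$ only through the RX- and TX-index sets it touches, never through the detailed incidence pattern.

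For the direction~\eqref{eq:theorem_feasibility}$\Rightarrow$\eqref{eq:SM_10}, I would fix arbitrary $\mathcal{S}_{\RX},\mathcal{S}_{\TX}\subseteq\mathcal{K}$ and apply~\eqref{eq:theorem_feasibility} to the ``rectangular'' equation set $\mathcal{I}\triangleq\{(i,j):i\in\mathcal{S}_{\RX},\,j\in\mathcal{S}_{\TX},\,i\neq j\}$. By the definitions in~\eqref{eq:SM_8} we have $|\mathcal{I}|=\mathcal{N}_{\Eq}(\mathcal{S}_{\RX},\mathcal{S}_{\TX})$; moreover $\mathcal{S}_{\RX}(\mathcal{I})\subseteq\mathcal{S}_{\RX}$ and $\mathcal{S}_{\TX}(\mathcal{I})\subseteq\mathcal{S}_{\TX}$ (these inclusions can be strict because of the $i\neq j$ restriction, but this is harmless since $N_i\geq1$ and $M_j\geq1$ make every dropped term nonnegative), so by the structural fact $\big|\bigcup_{(i,j)\in\mathcal{I}}\Var(\mathrm{E}_{ij})\big|=\mathcal{N}_{\Var}(\mathcal{S}_{\RX}(\mathcal{I}),\mathcal{S}_{\TX}(\mathcal{I}))\leq\mathcal{N}_{\Var}(\mathcal{S}_{\RX},\mathcal{S}_{\TX})$. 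Chaining these with~\eqref{eq:theorem_feasibility} yields $\mathcal{N}_{\Eq}(\mathcal{S}_{\RX},\mathcal{S}_{\TX})=|\mathcal{I}|\leq\big|\bigcup_{(i,j)\in\mathcal{I}}\Var(\mathrm{E}_{ij})\big|\leq\mathcal{N}_{\Var}(\mathcal{S}_{\RX},\mathcal{S}_{\TX})$, which is exactly~\eqref{eq:SM_10}.

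For the converse~\eqref{eq:SM_10}$\Rightarrow$\eqref{eq:theorem_feasibility}, I would take an arbitrary $\mathcal{I}\subseteq\mathcal{J}$ and set $\mathcal{S}_{\RX}\triangleq\mathcal{S}_{\RX}(\mathcal{I})$, $\mathcal{S}_{\TX}\triangleq\mathcal{S}_{\TX}(\mathcal{I})$. The structural fact now gives the exact identity $\big|\bigcup_{(i,j)\in\mathcal{I}}\Var(\mathrm{E}_{ij})\big|=\mathcal{N}_{\Var}(\mathcal{S}_{\RX},\mathcal{S}_{\TX})$, and since every pair in~$\mathcal{I}$ has its RX-index in~$\mathcal{S}_{\RX}$ and its TX-index in~$\mathcal{S}_{\TX}$, we have $\mathcal{I}\subseteq\{(i,j):i\in\mathcal{S}_{\RX},\,j\in\mathcal{S}_{\TX},\,i\neq j\}$ and hence $|\mathcal{I}|\leq\mathcal{N}_{\Eq}(\mathcal{S}_{\RX},\mathcal{S}_{\TX})$. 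Combining with~\eqref{eq:SM_10} gives $|\mathcal{I}|\leq\mathcal{N}_{\Eq}(\mathcal{S}_{\RX},\mathcal{S}_{\TX})\leq\mathcal{N}_{\Var}(\mathcal{S}_{\RX},\mathcal{S}_{\TX})=\big|\bigcup_{(i,j)\in\mathcal{I}}\Var(\mathrm{E}_{ij})\big|$, i.e.~\eqref{eq:theorem_feasibility}.

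The argument is entirely combinatorial, with no analytic content, so I expect it to go through essentially as sketched. The only place that calls for care — and the step I would watch most closely — is the tracking of which node indices genuinely appear in a given equation subset together with the correct handling of the forbidden self-pairs $i=j$; this is precisely what makes the rectangular index sets $\{(i,j):i\in\mathcal{S}_{\RX},\,j\in\mathcal{S}_{\TX},\,i\neq j\}$ the ``extremal'' instances of~\eqref{eq:theorem_feasibility} and hence the ones that survive in the reduced condition~\eqref{eq:SM_10}.
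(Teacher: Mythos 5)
Your proposal is correct and follows essentially the same route as the paper's Appendix~\ref{app:feasibility}: rewrite $\big|\bigcup_{(i,j)\in\mathcal{I}}\Var(\mathrm{E}_{ij})\big|$ node-wise as $\mathcal{N}_{\Var}(\mathcal{S}_{\RX}(\mathcal{I}),\mathcal{S}_{\TX}(\mathcal{I}))$ and observe that the rectangular equation sets are the extremal instances of~\eqref{eq:theorem_feasibility}, which is exactly the paper's argument that adding equations without enlarging the touched index sets only tightens the condition. Your explicit handling of the degenerate cases where the $i\neq j$ restriction makes the inclusions strict is a minor refinement the paper leaves implicit, but not a different proof.
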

\begin{proof}
A detailed proof is given in Appendix~\ref{app:feasibility}.
\end{proof}
%\begin{remark} 
%Following Theorem~$1$, it is necessary to try all the subsets~$\mathcal{I}$ included in~$\mathcal{J}$. The set~$\mathcal{J}$ can be seen to be of cardinal~$K(K-1)$, such that there are $2^{K(K-1)}$ subsets to test. In contrast, Theorem~$2$ requires ``only" to try all the sub-ICs. This means choosing a number of RXs between $1$ and $K$ and a number of TXs between $1$ and $K$, which gives in total $2^{2K}$ possibilities.
 %\qed 
%\end{remark}
The intuition behind Theorem~$2$ it that it is necessary to test in each IC formed by a subset of TXs and a subset of RXs, which we call a \emph{sub-IC}, that the number of variables is larger than the number of equations. This result is interesting because it is then possible to order the TXs and the RXs to only test condition~\eqref{eq:SM_10} in the sub-ICs with the least number of antennas. 

This leads to a side-contribution of this work which is a simple and intuitive algorithm for testing the feasibility of single-stream IA. Since this algorithm is obtained after very simple modifications of our CSIT allocation algorithm (which will be described later on), it is not given here. A detailed description can be found online in \cite{deKerret_online} along with the MATLAB code.
\begin{remark}
Following Theorem~$1$, it is necessary to test all the subsets~$\mathcal{I}$ included in~$\mathcal{J}$, which means testing a number of subsets increasing exponentially with~$K$ ($2^{K(K-1)}$). However, using the algorithm in \cite{deKerret_online} based on Theorem~$2$ leads to test only a number of subsets which increases polynomially with~$K$.\qed
\end{remark}
\begin{example}
Let us consider as toy-example the $[(1,1).(2,2).(3,3)]$~IC. Following our approach described in \cite{deKerret_online}, it is only necessary to test that there are more variables than equations in the $3$~sub-ICs~$[(1,1)]$, $[(1,1).(2,2)]$, and $[(1,1).(2,2).(3,3)]$. In this example, the ordering is trivial, but the TXs and the RXs can also be ordered in more complicated antenna configurations to obtain similar results.
\end{example}  
An additional interest of Theorem~$2$ is that it provides a useful insight into IA feasibility: The feasibility of IA in the full IC is verified by analyzing the feasibility of IA in all the \emph{sub-ICs} included in the full IC.

Note that the sub-IC obtained after selection of the RXs inside~$\mathcal{S}_{\RX}$ and the TXs inside~$\mathcal{S}_{\TX}$ is not necessarily a conventional IC due to the fact that the TXs and the RXs are not necessarily \emph{paired}. To model this scenario, we introduce in the following the notion of \emph{generalized IC}. %It will play an important role in understanding the relation between IA feasibility and CSIT.

%%%%%%%%%%%%%%%%%%%%%%%%%%%%%%%%%%%%%%%%%%%%%%%%%%%%%%
\subsubsection{Generalized IC}
%%%%%%%%%%%%%%%%%%%%%%%%%%%%%%%%%%%%%%%%%%%%%%%%%%%%%%

We refer to an IC in which a TX or a RX does not necessarily have its paired RX or TX included in the IC, as a \emph{generalized IC}. We represent the fact that the node is not part of the IC by writing a ``*" instead of its number of antennas. The IA feasibility criterion \eqref{eq:SM_10} is trivially extended to generalized ICs with the only difference being that only the TXs and the RXs inside the generalized-IC are considered. Note that we will often omit to mention the term ``generalized" when discussing a sub-IC as it is clear that a sub-IC can always be a generalized sub-IC.
\begin{example}
Let us consider the IC~$[(2,2)^3]$. We denote the sub-IC obtained from selecting only RX~$1$, RX~$2$, TX~$1$ and TX~$3$ by~$[(2,2).(2,*).(*,2)]$. Following Theorem~\ref{thm_feasibility}, the feasibility of IA is tested by considering only the sets of RXs included in the set~$\{1,2\}$ and the sets of TXs included in the set~$\{1,3\}$.
\end{example}
%%%%%%%%%%%%%%%%%%%%%%%%%%%%%%%%%%%%%%%%%%%%%%%%%%%%%%
%%%%%%%%%%%%%%%%%%%%%%%%%%%%%%%%%%%%%%%%%%%%%%%%%%%%%%
\subsection{Incomplete CSIT Model}\label{se:SM:Incomplete} 
%%%%%%%%%%%%%%%%%%%%%%%%%%%%%%%%%%%%%%%%%%%%%%%%%%%%%%
%%%%%%%%%%%%%%%%%%%%%%%%%%%%%%%%%%%%%%%%%%%%%%%%%%%%%%
The feasibility results from the literature, which we have recalled above, have always made use of an implicit full CSIT assumption. Surprisingly, the problem of revisiting the IA feasibility conditions under the light of a partial CSIT sharing framework has not been addressed before. To fill this gap, it is necessary to introduce a new model to take into account the partial CSIT sharing capability of the TXs.
 
Hence, we consider that a TX has either perfect knowledge of a channel coefficient or no information at all on that element. We represent the CSIT structure at TX~$j$ by the \emph{CSIT matrix}~$\mathbf{A}^{(j)}\in \{0,1\}^{N_{\Tot}\times M_{\Tot}}$ such that~$\{\mathbf{A}^{(j)}\}_{ik}=1$ if $\{\mathbf{H}\}_{ik}$ is known at TX~$j$, and $0$ otherwise. Denoting by~$\mathbf{H}^{(j)}$ the available CSI at TX~$j$, we obtain
\begin{equation}
\mathbf{H}^{(j)}=\mathbf{A}^{(j)} \odot \mathbf{H}.
\label{eq:SM_13}
\end{equation}
%with~$\odot$ denoting the Hadamart, or element-wise, product.
We define the CSIT allocation~$\mathcal{A}$ as the set of CSI representations available at all TXs:
\begin{equation}
\mathcal{A}=\{\mathbf{A}^{(j)} | \mathbf{A}^{(j)}\in \{0,1\}^{N_{\Tot}\times M_{\Tot}},j\in \mathcal{K}\}
\label{eq:SM_14}
\end{equation}
and we define the space~$\mathbb{A}$ containing all the possible CSIT allocations. We can then define the \emph{size} of an incomplete CSIT allocation as follows.
\begin{definition}
The size of a CSIT allocation~$\mathcal{A}$, denoted by~$\Size(\mathcal{A})$, is equal to the overall number of complex channel coefficients fed back to the TXs. Thus,
\begin{equation}
\Size(\mathcal{A})\triangleq\sum_{j=1}^K \norm{\mathbf{A}^{(j)}}^2_{\Fro}.
\label{eq:SM_15}
\end{equation}
\end{definition}  

To check whether the IA feasibility is preserved with a given CSIT allocation, we introduce the function~$f_{\Feas}$ which takes as argument a CSIT allocation~$\mathcal{A}$ and an antenna configuration~$[\prod_{k=1}^K(N_k,M_k)]$ and returns~$1$ if IA is feasible with these parameters and~$0$ otherwise. Note that this means that there exists one algorithm achieving IA with this CSIT allocation but it does not precise the algorithm. We also define the set~$\mathbb{A}_{\Feas}$ containing all the CSIT allocations for which IA is feasible. Hence,
\begin{equation}
\mathbb{A}_{\Feas}\triangleq\bigg\{\mathcal{A}|\mathcal{A}\in \mathbb{A},f_{\Feas}\bigg(\mathcal{A},\bigg[\prod_{k=1}^K(N_k,M_k)\bigg]\bigg)=1\bigg\}.
\label{eq:SM_16}
\end{equation}
Note that only the interfering channel matrices~$\mathbf{H}_{ij}$ with~$i\neq j$ are required to fulfill the IA constraints, and not the direct channel matrices~$\mathbf{H}_{jj}$. Thus, from a DoF point of view, we can always skip the direct channel matrices~$\mathbf{H}_{jj}$ in the feedback, which leads to the following definition.
\begin{definition}
A \emph{complete} CSIT allocation, denoted by~$\mathcal{A}_{\Comp}$, is defined by the knowledge of all the interfering channel matrices~$\mathbf{H}_{ij}$ with~$i\neq j$ at all TXs. Thus, the size of a complete CSIT allocation is
\begin{equation}
\Size\left(\mathcal{A}_{\Comp}\right)=K\left(N_{\Tot} M_{\Tot}-\sum_{i=1}^{K}N_i M_i\right).
\label{eq:SM_17}
\end{equation}
A CSIT allocation with a size smaller than~$\Size(\mathcal{A}_{\Comp})$ is said to be \emph{strictly incomplete}.
\end{definition}
At this stage, a natural question is to ask what is the most incomplete CSIT allocation which preserves the feasibility of IA, i.e., to find
\begin{equation}
\mathcal{A}_{\min}=\argmin_{\mathcal{A}\in \mathbb{A}_{\Feas}} \Size(\mathcal{A}).
\label{eq:SM_18}
\end{equation}
Note that we limit here our study to the IA feasible settings, i.e., such that~$\mathcal{A}_{\Comp}\in\mathbb{A}_{\Feas}$.

%%%%%%%%%%%%%%%%%%%%%%%%%%%%%%%%%%%%%%%%%%%%%%%%%%%%%%
%%%%%%%%%%%%%%%%%%%%%%%%%%%%%%%%%%%%%%%%%%%%%%%%%%%%%%
%%%%%%%%%%%%%%%%%%%%%%%%%%%%%%%%%%%%%%%%%%%%%%%%%%%%%%
%%%%%%%%%%%%%%%%%%%%%%%%%%%%%%%%%%%%%%%%%%%%%%%%%%%%%%
%%%%%%%%%%%%%%%%%%%%%%%%%%%%%%%%%%%%%%%%%%%%%%%%%%%%%%
\section{IA with Incomplete CSIT for Tightly-Feasible Channels}\label{se:Tight} 
%%%%%%%%%%%%%%%%%%%%%%%%%%%%%%%%%%%%%%%%%%%%%%%%%%%%%%
%%%%%%%%%%%%%%%%%%%%%%%%%%%%%%%%%%%%%%%%%%%%%%%%%%%%%%
%%%%%%%%%%%%%%%%%%%%%%%%%%%%%%%%%%%%%%%%%%%%%%%%%%%%%%
%%%%%%%%%%%%%%%%%%%%%%%%%%%%%%%%%%%%%%%%%%%%%%%%%%%%%%
%%%%%%%%%%%%%%%%%%%%%%%%%%%%%%%%%%%%%%%%%%%%%%%%%%%%%% 

%%%%%%%%%%%%%%%%%%%%%%%%%%%%%%%%%%%%%%%%%%%%%%%%%%%%%%
%%%%%%%%%%%%%%%%%%%%%%%%%%%%%%%%%%%%%%%%%%%%%%%%%%%%%% 
\subsection{General Criterion}\label{se:Tight:Feasibility}
%%%%%%%%%%%%%%%%%%%%%%%%%%%%%%%%%%%%%%%%%%%%%%%%%%%%%%
%%%%%%%%%%%%%%%%%%%%%%%%%%%%%%%%%%%%%%%%%%%%%%%%%%%%%% 
%%%%%%%%%%%%%%%%%%%%%%%%%%%%%%%%%%%%%%%%%%%%%%%%%%%%%% 
\subsubsection{Parametrization of the CSIT allocation}\label{se:Tight:Minimal}
%%%%%%%%%%%%%%%%%%%%%%%%%%%%%%%%%%%%%%%%%%%%%%%%%%%%%% 

The incomplete CSIT model described in Subsection~\ref{se:SM:Incomplete} allows for any TX to receive the feedback of any channel coefficient. However, we will show in this work that it is not meaningful (with the precoding algorithms considered here) to feedback to a given TX only some coefficients of the matrix~$\bH_{ij}$. Hence, only the CSIT allocations which can be written under the following form will be of interest to us in this work.%Intuitively, the knowledge of these coefficients can only be used (with the precoding algorithms considered) if all the coefficients are known. As a consequence, we introduce now a subset of possible CSIT allocations, which will form the focus of this work.

We define the matrix~$\mathbf{A}_{\mathcal{S}_{\RX},\mathcal{S}_{\TX}}$, where~$\mathcal{S}_{\RX}$ is a set of RXs and~$\mathcal{S}_{\TX}$ a set of TXs, such that $\mathbf{A}_{\mathcal{S}_{\RX},\mathcal{S}_{\TX}} \odot \mathbf{H}$ contains all the channel coefficients relative to the generalized sub-IC formed by the set of RXs~$\mathcal{S}_{\RX}$ and the set of TXs~$\mathcal{S}_{\TX}$, at the exception of the direct channel matrices~$\mathbf{H}_{jj},\forall j$.
Mathematically, this means that the matrix~$\mathbf{A}_{\mathcal{S}_{\RX},\mathcal{S}_{\TX}}$ of size~$N_{\Tot}\times M_{\Tot}$ has its only nonzero elements chosen to satisfy~$\forall x\neq y,x\in \mathcal{S}_{\RX}, y\in \mathcal{S}_{\TX},$
\begin{equation} 
\left(\mathbf{E}_{\RX}^x\right)^{\trans}\mathbf{A}_{\mathcal{S}_{\RX},\mathcal{S}_{\TX}}\mathbf{E}_{\TX}^y=\left(\mathbf{E}_{\RX}^{x}\right)^{\trans}\mathbf{1}_{N_{\Tot}\times M_{\Tot}}\mathbf{E}_{\TX}^	{y},
\label{eq:Tight_1}
\end{equation}
with $\mathbf{E}_{\TX}^n\triangleq \begin{bmatrix}\mathbf{0}_{\sum_{k=1}^{n-1}M_k\times M_n},\I_{M_n},\mathbf{0}_{\sum_{k=n+1}^{K}M_k\times M_n}\end{bmatrix}^{\trans}$ and the matrix $\mathbf{E}_{\RX}^n$ defined similarly with $N_i$ replacing $M_i$. Note that if either $\mathcal{S}_{\RX}$ or $\mathcal{S}_{\TX}$ is empty, the matrix~$\mathbf{A}_{\mathcal{S}_{\RX},\mathcal{S}_{\TX}}$ contains only $0$, i.e.,~$\mathbf{A}_{\mathcal{S}_{\RX},\mathcal{S}_{\TX}}=\bm{0}_{N_{\Tot}\times M_{\Tot}}$. 

\begin{remark}
Using this parameterization restricts the possible CSIT allocations. However, we will show that considering these CSIT allocations is sufficient to achieve significant gains. In fact, it is believed that there is no loss incurred by this parameterization.\qed
\end{remark}

%%%%%%%%%%%%%%%%%%%%%%%%%%%%%%%%%%%%%%%%%%%%%%%%%%%%%%%
\subsubsection{Main theorem}\label{se:Tight:Feasibility}
%%%%%%%%%%%%%%%%%%%%%%%%%%%%%%%%%%%%%%%%%%%%%%%%%%%%%%%
We can now state one of our main results.
\begin{theorem}
\label{thm_incomplete}
In a tightly-feasible $[\prod_{k=1}^K (N_k,M_k)]$~IC, if there exists a tightly-feasible sub-IC formed by the set of TXs~$\mathcal{S}_{\TX}$ and the set of RXs~$\mathcal{S}_{\RX}$, i.e.,
\begin{equation}
\mathcal{N}_{\Var}(\mathcal{S}_{\RX},\mathcal{S}_{\TX})=\mathcal{N}_{\Eq}(\mathcal{S}_{\RX},\mathcal{S}_{\TX}),
\label{eq:Tight_3}
\end{equation} 
then the incomplete CSIT allocation~$\mathcal{A}=\{\mathbf{A}^{(j)}|j\in \mathcal{K}\}$ preserves IA feasibility, i.e., $\mathcal{A}\in \mathbb{A}_{\Feas}$, if
\begin{equation}
\begin{aligned}
\mathbf{A}^{(j)}&=\mathbf{A}_{\mathcal{S}_{\RX},\mathcal{S}_{\TX}},\qquad&\forall j\in \mathcal{S}_{\TX}\\
\mathbf{A}^{(j)}&=\mathbf{A}_{\mathcal{K},\mathcal{K}}=\mathbf{1}_{N_{\Tot}\times M_{\Tot}},\qquad&\forall j\notin \mathcal{S}_{\TX}.
\end{aligned}
\label{eq:Tight_4}
\end{equation} 
\end{theorem}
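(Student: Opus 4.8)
The plan is to show that the CSIT allocation in \eqref{eq:Tight_4} is rich enough that one can construct a valid IA solution for the full IC in two stages, exploiting the tightly-feasible sub-IC as a ``core'' whose beamformers and combiners are computed first, using only the shared knowledge of $\mathbf{A}_{\mathcal{S}_{\RX},\mathcal{S}_{\TX}}\odot\mathbf{H}$. First I would observe that, since the sub-IC on $(\mathcal{S}_{\RX},\mathcal{S}_{\TX})$ is itself tightly-feasible (it satisfies \eqref{eq:Tight_3}, and, being a sub-IC of a feasible IC, it satisfies the feasibility criterion \eqref{eq:SM_10} for all of its own sub-ICs by Theorem~\ref{thm_feasibility}), IA is feasible within it. Crucially, every TX $j\in\mathcal{S}_{\TX}$ knows exactly the channel coefficients of this sub-IC, so all of them can compute the \emph{same} IA solution for the sub-IC (using a common deterministic algorithm / tie-breaking rule, or simply agreeing offline on the algorithm as the paper's $f_{\Feas}$ abstraction permits). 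This fixes $\bm{t}_j$ for $j\in\mathcal{S}_{\TX}$ and $\bm{g}_i$ for $i\in\mathcal{S}_{\RX}$, satisfying all IA constraints $\bm{g}_i^{\He}\mathbf{H}_{ij}\bm{t}_j=0$ with $i\in\mathcal{S}_{\RX}$, $j\in\mathcal{S}_{\TX}$, $i\neq j$.

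Next I would handle the remaining nodes. Every TX $j\notin\mathcal{S}_{\TX}$ has full CSIT ($\mathbf{A}^{(j)}=\mathbf{1}_{N_{\Tot}\times M_{\Tot}}$), and every RX $i\notin\mathcal{S}_{\RX}$ needs no CSIT to apply its own combiner; so the only obstruction is a counting/genericity argument showing that the already-fixed core solution does not over-constrain the rest. The key step is to verify that, after removing the ``consumed'' variables and equations of the core sub-IC, the \emph{residual} system --- IA constraints involving at least one node outside the core --- still satisfies the proper/feasible counting condition. Concretely, I would show that for every subset of the residual equations, the number of free variables still dominates, using that (i) the full IC is tightly-feasible so globally $\mathcal{N}_{\Var}=\mathcal{N}_{\Eq}$, (ii) the core sub-IC exactly balances its own variables and equations by \eqref{eq:Tight_3}, so removing it subtracts equal amounts from both sides, and (iii) any residual sub-collection of equations, together with the core, forms a sub-IC of the full IC to which Theorem~\ref{thm_feasibility} applies, giving the needed slack. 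This reduces to an inclusion–exclusion bookkeeping over RX/TX index sets, where the tightly-feasible balance of the core is what makes the subtraction clean.

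Then I would invoke (a generalized-IC version of) Theorem~\ref{thm_feasibility}, or re-run its proof from Appendix~\ref{app:feasibility}, on the residual system with the core beamformers and combiners treated as fixed generic vectors, to conclude that a solution to the residual IA constraints exists almost surely; crucially the computation of each residual precoder $\bm{t}_j$, $j\notin\mathcal{S}_{\TX}$, requires only the channel coefficients that TX $j$ actually possesses, namely all of $\mathbf{H}$, while the fixed core precoders $\bm{t}_k$, $k\in\mathcal{S}_{\TX}$, are known to these full-CSIT TXs as well (or can be recomputed by them). Stitching the two stages gives $\bm{t}_1,\ldots,\bm{t}_K$ and $\bm{g}_1,\ldots,\bm{g}_K$ satisfying all of \eqref{eq:SM_3}, hence $\mathcal{A}\in\mathbb{A}_{\Feas}$.

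The main obstacle I anticipate is step two: making the residual counting argument airtight, in particular ruling out the possibility that the \emph{specific} (non-generic, algorithm-chosen) core solution lies in a measure-zero bad set for the residual system. The clean way around this is to argue that the core solution, as an algebraic function of the generic channel coefficients it depends on, is itself generic enough --- i.e., the residual equations, viewed as polynomials in the residual variables with coefficients that are rational functions of $\mathbf{H}$, do not degenerate identically --- which is exactly the flavor of argument underpinning the dimension-count proofs of Theorem~\ref{theorem_raza} and Theorem~\ref{thm_feasibility}, and can be borrowed from there.
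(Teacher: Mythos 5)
Your proposal follows essentially the same route as the paper's proof in Appendix~\ref{app:thm_incomplete}: first let the TXs in $\mathcal{S}_{\TX}$ (who share the sub-IC channel knowledge) jointly align within the tightly-feasible core, then fix those beamformers and use the additivity of $\mathcal{N}_{\Var}$ and $\mathcal{N}_{\Eq}$ together with the exact balance \eqref{eq:Tight_3} to cancel the core's contribution and recover precisely the counting conditions of the residual (effective) IC, which the full-CSIT TXs outside $\mathcal{S}_{\TX}$ can then satisfy. The genericity concern you flag about the algorithm-chosen core solution is not treated any more explicitly in the paper, which simply identifies the resulting inequalities with the feasibility conditions of the effective channel, so your plan is if anything slightly more careful on that point.
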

\begin{proof}
A detailed proof is provided in Appendix~\ref{app:thm_incomplete}.
\end{proof}
This theorem implies that if there exists a tightly-feasible sub-IC \emph{strictly} included in the considered IC, then there also exists a \emph{strictly} incomplete CSIT allocation preserving IA feasibility.  
\begin{example}
Let us consider as toy-example the $[(2,2).(2,2).(2,2).(4,4)]$~IC. We can easily observe that the first $3$ TX/RX pairs form the well-known tightly-feasible $[(2,2)^3]$~IC. Hence, if the $3$ first TXs align interference inside this sub-IC, RX~$4$ has then enough antenna to remove all its received interference. In addition, TX~$4$ can use its $4$~antennas to eliminate the interference that it emits to the signal subspaces at the first $3$~RXs.
\end{example}

%\begin{remark}
%From the iterative use of this property, we will show in the following that a reduced CSIT allocation ensuring IA feasibility is obtained if each TX receives the CSIT relative to the smallest tightly-feasible IC to which it belongs.\qed
%\end{remark}

In fact, it can be easily seen that the obtained incomplete CSIT allocation exploits the \emph{heterogeneity} of the antenna configuration. Indeed, there can be a tightly-feasible sub-IC \emph{strictly} included in a tightly-feasible IC only if the antenna configuration is heterogeneous.

\begin{corollary}
\label{corollary_homogeneous}
In the homogeneous tightly-feasible $[(N,M)^K]$~IC (this implies $M+N=K+1$) with~$M\neq1$ and~$M\neq K$, there exists no generalized tightly-feasible sub-IC strictly included in the IC. Hence, the previous sufficient condition leads to no CSIT reduction.
\end{corollary}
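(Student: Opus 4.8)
The plan is to prove Corollary~\ref{corollary_homogeneous} by a direct counting argument: I want to show that any generalized sub-IC built from RX subset $\mathcal{S}_{\RX}$ and TX subset $\mathcal{S}_{\TX}$, with $(|\mathcal{S}_{\RX}|,|\mathcal{S}_{\TX}|)\neq(K,K)$ and both sets nonempty, has strictly more variables than equations, hence cannot be tightly-feasible in the sense of \eqref{eq:Tight_3}. Write $a\triangleq|\mathcal{S}_{\RX}|$ and $b\triangleq|\mathcal{S}_{\TX}|$, and let $c\triangleq|\mathcal{S}_{\RX}\cap\mathcal{S}_{\TX}|$ count the paired (non-trivial) equations removed. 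In the homogeneous setting $N_i=N$, $M_j=M$, formula \eqref{eq:SM_8} gives $\mathcal{N}_{\Var}=aN+bM-2$ and $\mathcal{N}_{\Eq}=ab-c$, so the gap I must show is positive is
\begin{equation}
\Delta(a,b,c)\triangleq aN+bM-2-ab+c.
\label{eq:cor_gap}
\end{equation}

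First I would use the global tight-feasibility hypothesis, which by \eqref{eq:SM_11} in the homogeneous case reads $K(N+M)=K(K+1)$, i.e.\ $M+N=K+1$, so I may substitute $N=K+1-M$. Then $\Delta(a,b,c)=a(K+1-M)+bM-2-ab+c = a(K+1)-(a-b)M-ab-2+c$. Since $c\geq 0$ and, when a pair is excluded, $c\geq \max(0,a+b-K)$, the conservative bound $c\geq 0$ should already suffice in the generic regime, with the extra slack $c\ge a+b-K$ needed only near the corner $a=b=K$. The key step is to show $\Delta(a,b,0)\geq 1$ for all relevant $(a,b)$ with $1\le a,b\le K$ and $(a,b)\neq(K,K)$, using $2\le M\le K-1$ (equivalently $2\le N\le K-1$, which is exactly where the hypotheses $M\neq 1$, $M\neq K$ bite). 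I would split into cases according to the sign of $a-b$: if $a\le b$ then $-(a-b)M\ge 0$ and $\Delta\ge a(K+1)-ab-2 = a(K+1-b)-2\ge a\cdot 1-2$, which is not quite enough for $a=1$, so I refine using $M\le K-1$: $-(a-b)M\ge (b-a)\cdot 1 \ge 0$ only helps when I instead keep $M$ small, i.e.\ bound $(a-b)M\le (a-b)(K-1)$ when $a>b$ and $(a-b)M\ge (a-b)\cdot 2$ (using $a-b<0$) when $a<b$; carrying this through reduces everything to a finite, explicitly checkable inequality linear in $a,b$.

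The honest statement is that the main obstacle is the boundary bookkeeping: the inequality $\Delta>0$ is comfortably true in the interior but becomes tight exactly along $a=K$ or $b=K$ (and degenerates at $a=b=K$, which is excluded), so the argument must be organized to extract the single unit of slack there — this is precisely where one needs $c\ge a+b-K$ rather than $c\ge0$, and where the conditions $M\neq 1, M\neq K$ are essential (they rule out the two homogeneous families $[(1,K)^K]$ and $[(K,1)^K]$ in which a proper strict sub-IC of the form $[(1,K)^{K'}]$ resp.\ $[(K,1)^{K'}]$ with $K'<K$ is itself tightly-feasible). I would finish by remarking that, since no generalized tightly-feasible sub-IC is strictly included in the IC, the only sub-IC satisfying \eqref{eq:Tight_3} is the full IC itself ($\mathcal{S}_{\RX}=\mathcal{S}_{\TX}=\mathcal{K}$), for which the allocation \eqref{eq:Tight_4} degenerates to the complete allocation $\mathbf{A}^{(j)}=\mathbf{1}_{N_{\Tot}\times M_{\Tot}}$ for all $j$; hence Theorem~\ref{thm_incomplete} yields $\Size(\mathcal{A})=\Size(\mathcal{A}_{\Comp})$ (up to the direct blocks), i.e.\ no reduction, which is the claimed conclusion.
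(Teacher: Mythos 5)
Your overall strategy---evaluating the tight-feasibility condition \eqref{eq:Tight_3} for an arbitrary generalized sub-IC in the homogeneous setting---is exactly the route the paper intends (its own proof is omitted as ``evaluating \eqref{eq:Tight_3} in an homogeneous setting''), but your counting formula is wrong, and this is not a cosmetic issue. In \eqref{eq:SM_8} the ``$-1$'' is per selected node (each beamformer loses one free variable to scale invariance), consistently with \eqref{eq:SM_4}, with \eqref{eq:SM_7} in the appendix, and with the characterization \eqref{eq:SM_11}; hence in the homogeneous case $\mathcal{N}_{\Var}(\mathcal{S}_{\RX},\mathcal{S}_{\TX})=\sum_{i\in\mathcal{S}_{\RX}}(N_i-1)+\sum_{j\in\mathcal{S}_{\TX}}(M_j-1)=a(N-1)+b(M-1)$, not $aN+bM-2$. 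A sanity check exposes the error: with your count the full IC would have $\mathcal{N}_{\Var}-\mathcal{N}_{\Eq}=K(N+M)-2-K(K-1)=2K-2>0$, i.e.\ it would not itself satisfy \eqref{eq:Tight_3}, contradicting the very hypothesis $M+N=K+1$ you import from \eqref{eq:SM_11} (and with your count the homogeneous proper condition would not reduce to $M+N\ge K+1$). The quantity you must show positive is therefore $\Delta(a,b,c)=a(N-1)+b(M-1)-ab+c$ with $c=|\mathcal{S}_{\RX}\cap\mathcal{S}_{\TX}|\ge\max(0,a+b-K)$; your algebra is carried out on an inflated gap (larger by $a+b-2$) and so does not establish the corollary.

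The second gap is that, even on its own terms, the argument is not finished: you yourself identify the boundary bookkeeping (the cases where $a$ or $b$ equals $K$, where $c\ge a+b-K$ and $2\le M\le K-1$ must be invoked) as the crux and then stop at ``a finite, explicitly checkable inequality'' without checking it---but that casework \emph{is} the proof. With the correct gap it closes: writing $m=M-1\ge 1$, $n=N-1\ge 1$, $n+m=K-1$, one shows $\Delta=an+bm-ab+c>0$ for all nonempty $\mathcal{S}_{\RX},\mathcal{S}_{\TX}$ with $(a,b)\neq(K,K)$ by splitting on $a+b\le K$ (where $c\ge 0$ and $m,n\ge1$ suffice) and $a+b>K$ (where $c\ge a+b-K$ is needed; the binding configurations, e.g.\ $a=K,\,b=K-1$ and $a=K-1,\,b=K$, leave slack $K-M\ge 1$ and $M-1\ge 1$ respectively, which is exactly where $M\neq 1,K$ enters). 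Finally, your motivating side remark is inaccurate: under $M+N=K+1$ a paired homogeneous sub-IC $[(N,M)^{K'}]$ with $K'<K$ is never tight (it would require $M+N=K'+1$); what makes $M\in\{1,K\}$ exceptional are degenerate \emph{generalized} sub-ICs, e.g.\ a single TX alone ($\mathcal{N}_{\Var}=\mathcal{N}_{\Eq}=0$) when $M=1$, or all $K$ single-antenna RXs together with a strict subset of the TXs when $N=1$.
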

\begin{proof} 
The proof follows easily by evaluating \eqref{eq:Tight_3} in an homogeneous setting and is omitted for brevity.
\end{proof}
This property only holds for tightly-feasible settings and we will show in the following section that CSIT reductions can be achieved for super-feasible ICs in any antenna configuration. It is then not the antenna heterogeneity which is exploited, but the additional antennas.
%If the full IC is tightly-feasible, a strictly smaller IC can be tightly-feasible only by exploiting the heterogeneity in the antenna configuration. Hence, the sufficient condition given in Theorem~\ref{thm_incomplete} cannot be fulfilled in any tightly-feasible homogeneous IC with $M\neq 1$ and $M\neq K$.
% This comes from the the very quick increase with~$K$ of the number antenna which is needed for IA to be feasible. 

%%%%%%%%%%%%%%%%%%%%%%%%%%%%%%%%%%%%%%%%%%%%%%%%%%%%%% 
%%%%%%%%%%%%%%%%%%%%%%%%%%%%%%%%%%%%%%%%%%%%%%%%%%%%%% 
\subsection{Example of tightly-feasible configuration}\label{se:Tight:Ex}
%%%%%%%%%%%%%%%%%%%%%%%%%%%%%%%%%%%%%%%%%%%%%%%%%%%%%% 
%%%%%%%%%%%%%%%%%%%%%%%%%%%%%%%%%%%%%%%%%%%%%%%%%%%%%% 

Applying iteratively Theorem~\ref{thm_incomplete} leads to a CSIT allocation algorithm which we will describe in Subsection~\ref{se:Tight:Algo}. The corresponding problem of designing an algorithm achieving IA based on the incomplete CSIT allocation will then be tackled in Subsection~\ref{se:Tight:IA_precoding}. But before providing the algorithms, we describe now in a small example how our approach works, so as to gain insight into the problem.% Indeed, although the mathematical description is difficult, the intuition is very simple.

Let us consider the IC formed by the antenna configuration~$[(2,3).(2,4).(3,5).(3,2).(4,2)]$. The CSIT allocation algorithm (which will be presented in Subsection~\ref{se:Tight:Algo}) returns 
\begin{equation*} 
\begin{aligned}
\mathcal{A}&=\bigg\{\mathbf{A}^{(1)}=\mathbf{A}_{\{1,2,3\},\{4,5,1\}},\mathbf{A}^{(2)}=\mathbf{A}_{\{1,2,3,4\},\{1,2,4,5\}}\\
&,\mathbf{A}^{(3)}\!=\!\mathbf{A}_{\mathcal{K},\mathcal{K}},\mathbf{A}^{(4)}\!=\!\mathbf{A}_{\{1,2\},\{4,5\}},\mathbf{A}^{(5)}\!=\!\mathbf{A}_{\{1,2\},\{4,5\}}\bigg\}
\end{aligned}
\label{eq:ExTight_2} 
\end{equation*}
We remind the reader that the notation~$\mathbf{A}^{(4)}=\mathbf{A}_{\{1,2\},\{4,5\}}$ means that TX~$4$ receives the CSI relative to the sub-IC formed by the TXs in the set $\{4,5\}$ and the RXs in the set~$\{1,2\}$.

Hence, TX~$4$ and TX~$5$ have only the CSI sufficient to align their interference at RX~$1$ and RX~$2$, which is in fact the first step of the IA algorithm. Once this is done, TX~$1$ designs its beamformer to align its interference on the interference subspace created by TX~$4$ and TX~$5$ at RX~$2$ and RX~$3$. Note that it has a sufficient CSI to do so. Proceeding further, TX~$2$ aligns its interference on the interference subspace spanned at RX~$1$, RX~$3$, and RX~$4$ by the previous TX beamformers. At this step, all the interference subspaces have been generated and TX~$3$ uses its~$5$ antennas to align its interference at all the RXs. 

The general idea is very simple and reads as follows: The TX beamformers in the smallest tightly-feasible ICs are computed first until all the TX beamformers are computed. Note that the size of the incomplete CSIT allocation obtained in the previous example is equal to~$346$ while the complete CSIT allocation has a size of~$905$.

%%%%%%%%%%%%%%%%%%%%%%%%%%%%%%%%%%%%%%%%%%%%%%%%%%%%%%
%%%%%%%%%%%%%%%%%%%%%%%%%%%%%%%%%%%%%%%%%%%%%%%%%%%%%% 
\subsection{CSIT Allocation Algorithm }\label{se:Tight:Algo}
%%%%%%%%%%%%%%%%%%%%%%%%%%%%%%%%%%%%%%%%%%%%%%%%%%%%%%
%%%%%%%%%%%%%%%%%%%%%%%%%%%%%%%%%%%%%%%%%%%%%%%%%%%%%% 
The CSIT allocation algorithm takes as input the antenna configuration~$[\prod_{k=1}^K (N_k,M_k)]$ and returns as output the incomplete CSIT allocation~$\mathcal{A}=\{\mathbf{A}^{(j)}|j\in \mathcal{K}\}$ such that	
\begin{equation}
\mathbf{A}^{(j)}=\mathbf{A}_{\mathcal{S}_{\RX}^{(j)},\mathcal{S}_{\TX}^{(j)}}, \forall j 
\end{equation}
with $\mathcal{S}_{\RX}^{(j)},\mathcal{S}_{\TX}^{(j)}\subset \mathcal{K}$.
\begin{remark}
With simple words, this algorithm finds all the tightly-feasible sub-ICs and allocates to each TX the CSI relative to the smallest tightly-feasible sub-IC to which it belongs.\qed
\end{remark}
Let us consider w.l.o.g. the problem of allocating the CSI to TX~$j$.

%%%%%%%%%%%%%%%%%%%%%%%%%%%%%%%%%%%%%%%%%%%%%%%%%%%%%% 
\textbf{Initialization: }
%%%%%%%%%%%%%%%%%%%%%%%%%%%%%%%%%%%%%%%%%%%%%%%%%%%%%% 
We first define an initial pair of sets~$\mathcal{S}_{}\triangleq (\mathcal{S}_{\RX},\mathcal{S}_{\TX})$ initialized such that
\begin{equation}
\mathcal{S}_{}=(\emptyset,\{j\}).
\label{eq:algo_1}
\end{equation}
The remaining TXs (without considering TX $j$) are ordered by increasing number of antennas, i.e., with the permutation $\sigma_{\TX}$ satisfying
\begin{equation}
M_{\sigma_{\TX}(i)}\leq M_{\sigma_{\TX}(i+1)},\qquad \forall i\in\{1,\ldots,K-2\}
\label{eq:algo_2}
\end{equation} 
and symmetrically, the RXs are ordered by increasing number of antennas, i.e., with the permutation $\sigma_{\RX}$ satisfying
\begin{equation}
N_{\sigma_{\RX}(i)}\leq N_{\sigma_{\RX}(i+1)},\qquad \forall i.
\label{eq:algo_3}
\end{equation}
In case of equality, we order the TXs to ensure that  
\begin{equation}
(M_{\sigma_{\TX}(i)}= M_{\sigma_{\TX}(i\!+\!1)})\Rightarrow N_{\sigma_{\TX}(i)}\geq N_{\sigma_{\TX}(i\!+\!1)},\quad \forall i.
\label{eq:algo_4}
\end{equation}
Similarly, the RX ordering is modified to ensure that
\begin{equation}
(N_{\sigma_{\RX}(i)}= N_{\sigma_{\RX}(i\!+\!1)})\Rightarrow M_{\sigma_{\RX}(i)}\geq M_{\sigma_{\RX}(i\!+\!1)},\quad \forall i.
\label{eq:algo_5}
\end{equation} 
In case both the two TXs and their matched RXs have the same number of antennas, the RX ordering~$\sigma_{\RX}$ is modified to ensure that the RXs are ordered in the opposite of the TXs, i.e.,
\begin{figure*}
\begin{equation}
\LB M_{\sigma_{\TX}(i)}= M_{\sigma_{\TX}(i+1)},N_{\sigma_{\TX}(i)}= N_{\sigma_{\TX}(i+1)}\RB \Rightarrow \LB\sigma_{\RX}^{-1}(\sigma_{\TX}(i+1))<\sigma_{\RX}^{-1}(\sigma_{\TX}(i))\RB,\quad \forall i.
\label{eq:algo_4}
\end{equation} 
\end{figure*}
\begin{remark}
These two permutations have been defined such that selecting the TXs and the RXs respectively according to $\sigma_{\TX}$ and $\sigma_{\RX}$ will lead to select the TXs and the RXs with the smallest number of antennas with non-matched TXs and RXs in case of equality in the number of antennas. This can easily be seen to ensure that the ``most tight" sub-ICs are selected.\qed
\end{remark}

%%%%%%%%%%%%%%%%%%%%%%%%%%%%%%%%%%%%%%%%%%%%%%%%%%%%%% 
\textbf{Update at step~$n$: }
%%%%%%%%%%%%%%%%%%%%%%%%%%%%%%%%%%%%%%%%%%%%%%%%%%%%%% 
 Let us assume that we are given the pair of sets~$\mathcal{S}_{}=(\mathcal{S}_{\RX},\mathcal{S}_{\TX})$.
\begin{enumerate}
\item If $\mathcal{N}_{\Var}(\mathcal{S}_{\RX},\mathcal{S}_{\TX})=\mathcal{N}_{\Eq}(\mathcal{S}_{\RX},\mathcal{S}_{\TX})$ is satisfied by the sets~$\mathcal{S}_{\RX}$ and~$\mathcal{S}_{\TX}$, the sub-IC obtained is tightly-feasible and the algorithm has reached its end. We set $\mathcal{S}_{\RX}^{(j)}=\mathcal{S}_{\RX}$, $\mathcal{S}_{\TX}^{(j)}=\mathcal{S}_{\TX}$ and
\begin{equation} 
\mathbf{A}^{(j)}=\mathbf{A}_{\mathcal{S}_{\RX}^{(j)},\mathcal{S}_{\TX}^{(j)}}. 
\label{eq:algo_6}
\end{equation} 
\item If $\mathcal{N}_{\Var}(\mathcal{S}_{\RX},\mathcal{S}_{\TX})\neq\mathcal{N}_{\Eq}(\mathcal{S}_{\RX},\mathcal{S}_{\TX})$, we verify whether adding the next RX adds more equations than variables, i.e., whether
\begin{equation} 
\begin{aligned}
&\mathcal{N}_{\Var}(\mathcal{S}_{\RX},\mathcal{S}_{\TX})- \mathcal{N}_{\Eq}(\mathcal{S}_{\RX},\mathcal{S}_{\TX})\\&\geq  \mathcal{N}_{\Var}(\{\mathcal{S}_{\RX},\sigma_{\RX}(|\mathcal{S}_{\RX}|+1)\},\mathcal{S}_{\TX})\\
&- \mathcal{N}_{\Eq}(\{\mathcal{S}_{\RX},\sigma_{\RX}(|\mathcal{S}_{\RX}|+1)\},\mathcal{S}_{\TX})
%|\mathcal{S}^{\TX}_{}|-\mathds{1}_{\sigma_{\RX}(|\mathcal{S}^{\RX}_{}|+1)\in\mathcal{S}^{\RX}_{}}> N_{\sigma_{\RX}(|\mathcal{S}^{\RX}_{}|+1)}-1.
\end{aligned}
\label{eq:algo_7}
\end{equation}
\begin{itemize}
\item If~\eqref{eq:algo_7} is satisfied, we set
\begin{equation} 
\mathcal{S}_{\RX}^{}=\{\mathcal{S}_{\RX},\sigma_{\RX}(|\mathcal{S}_{\RX}|+1)\}
\label{eq:algo_8}
\end{equation}
and we start over at step~$n+1$.
\item If \eqref{eq:algo_7} is not satisfied, then
\begin{itemize}
\item If~$|\mathcal{S}_{\TX}|<K$, we increase the set of TXs as 
\begin{equation} 
\mathcal{S}_{\TX}=\left\{\mathcal{S}_{\TX},\sigma_{\TX}(|\mathcal{S}_{\TX}|+1)\right\}
\label{eq:algo_9}
\end{equation}
and we start over at step~$n+1$.
\item If~$|\mathcal{S}_{\TX}|=K$, then the algorithm has reached its end and we set $\mathcal{S}_{\RX}^{(j)}=\mathcal{S}_{\RX}$ and $\mathcal{S}_{\TX}^{(j)}=\mathcal{S}_{\TX}$ and
\begin{equation} 
\mathbf{A}^{(j)}=\mathbf{A}_{\mathcal{S}_{\RX}^{(j)},\mathcal{S}_{\TX}^{(j)}}. 
\label{eq:algo_10}
\end{equation}
\end{itemize}
\end{itemize} 
\end{enumerate}
 
%%%%%%%%%%%%%%%%%%%%%%%%%%%%%%%%%%%%%%%%%%%%%%%%%%%%%%
%%%%%%%%%%%%%%%%%%%%%%%%%%%%%%%%%%%%%%%%%%%%%%%%%%%%%% 
\subsection{IA Algorithm for Incomplete CSIT Allocation} \label{se:Tight:IA_precoding}
%%%%%%%%%%%%%%%%%%%%%%%%%%%%%%%%%%%%%%%%%%%%%%%%%%%%%%
%%%%%%%%%%%%%%%%%%%%%%%%%%%%%%%%%%%%%%%%%%%%%%%%%%%%%% 
We consider now the CSIT allocation~$\mathcal{A}$ to be given and we describe a novel IA algorithm which achieves IA using an adequate incomplete CSIT allocation. The description of the algorithm is split into two parts: In Sub-subsection~\ref{se:Tight:IA_precoding:effective}, an algorithm forming a building block of the total precoding function is described, while it is shown in Sub-subsection~\ref{se:Tight:IA_precoding:incomplete} how this sub-algorithm is used to design the IA precoder. Finally, it is demonstrated in Sub-subsection~\ref{se:Tight:Min} that the proposed algorithm achieves IA. 

The IA precoding algorithm runs in a distributed fashion at each TX and is denoted by~$f_{\IA}$. It takes as input the antenna configuration, the CSIT allocation policy, and the channel coefficients known at the TX, and returns the beamformer for this TX. Thus, we can write at TX~$j$
\begin{equation}
\bm{t}_j=f_{\IA}\bigg(\bigg[\prod_{k=1}^K(N_k,M_k)\bigg],\mathcal{A},\mathbf{H}^{(j)}\bigg).
\label{eq:algo_11}
\end{equation}

%%%%%%%%%%%%%%%%%%%%%%%%%%%%%%%%%%%%%%%%%%%%%%%%%%%%%%
\subsubsection{IA algorithm for the effective channel}\label{se:Tight:IA_precoding:effective}
%%%%%%%%%%%%%%%%%%%%%%%%%%%%%%%%%%%%%%%%%%%%%%%%%%%%%% 
We start by introducing an IA algorithm~$f_{\Eff}$ which will be a building block for our algorithm. It consists in running an IA algorithm over the \emph{effective channel}, which we define as the channel obtained once a fraction of the TX beamformers have been fixed. 
\begin{example}
Let us consider the $[(2,2).(2,2).(2,2)]$~IC and that the TX beamformer of TX~$1$ has been fixed. The resulting \emph{effective channel} is equal to the initial channel with the difference that the channel from TX~$1$ to RX~$i$ is given by $\bH_{i1}\bt_1,\forall i$. The antenna configuration of the effective channel is then~$[(2,1).(2,2).(2,2)]$.
\end{example}
Taking as input the set containing the fixed beamformers~$\mathcal{B}_{\TX}^{\Fix}$ and a channel matrix~$\mathbf{G}$, it returns as output the set of beamformers $\mathcal{B}_{\TX}$ obtained after having run a conventional IA algorithm from the literature over this effective channel. Note that since the TX beamformers inside~$\mathcal{B}_{\TX}^{\Fix}$ are not modified, it holds that~$\mathcal{B}_{\TX}^{\Fix}\subset \mathcal{B}_{\TX}$. We can then write
\begin{equation}
\mathcal{B}_{\TX}=f_{\Eff}\big(\mathbf{G},\mathcal{B}_{\TX}^{\Fix}\big).
\label{eq:algo_12}
\end{equation} 
A number of IA algorithms can be run over the effective channel, and we will use the most simple IA algorithm called the \emph{min-leakage} algorithm\cite{Gomadam2008}. We recall for completeness its main steps in Appendix~\ref{app:minLeak}. Our IA algorithm is obtained from the min-leakage algorithm after two simple modifications of the update formulas [Cf. equations \eqref{eq:ML_2} and \eqref{eq:ML_3}]:
\begin{itemize}
\item The update of the beamformers on the RX side (resp. on the TX side) is done by summing over all the interfering TXs (resp. RXs) and not from~$1$ to $K$ because there are not necessarily $K$~TXs or $K$~RXs.
\item The TX beamformers contained in $\mathcal{B}_{\TX}^{\Fix}$ are kept unchanged.
\end{itemize} 

%%%%%%%%%%%%%%%%%%%%%%%%%%%%%%%%%%%%%%%%%%%%%%%%%%%%%%
\subsubsection{Precoding with incomplete CSIT} \label{se:Tight:IA_precoding:incomplete}
%%%%%%%%%%%%%%%%%%%%%%%%%%%%%%%%%%%%%%%%%%%%%%%%%%%%%%

Let us consider now the precoding at TX~$j$ with the CSIT allocation~$\mathbf{H}^{(j)}=\mathbf{A}_{\mathcal{S}_{\RX}^{(j)},\mathcal{S}_{\TX}^{(j)}}\odot \mathbf{H}$. We define now in a recursive manner the precoding algorithm $f_{\IA}$ introduced in \eqref{eq:algo_11}.

We start by defining the set~$\mathcal{C}_j$ containing all the TXs whose CSIT allocations are strictly included in the CSIT known at TX~$j$. Hence the set~$\mathcal{C}_j$ is defined as
\begin{equation}
\mathcal{C}_j\triangleq \{k|\mathcal{S}_{\RX}^{(k)}\subsetneq \mathcal{S}_{\RX}^{(j)},\mathcal{S}_{\TX}^{(k)}\subsetneq \mathcal{S}_{\TX}^{(j)}\}.
\label{eq:algo_13}
\end{equation}
The beamformer~$\bm{t}_j$ is then obtained from
\begin{equation}
\bm{t}_j=f_{\Eff}\big(\tilde{\mathbf{H}}^{(j)},\{\bm{t}_k\}_{k\in \mathcal{C}_j}\big)
\label{eq:algo_14}
\end{equation}
where~$\tilde{\mathbf{H}}^{(j)}$ is the submatrix of~${\mathbf{H}}^{(j)}$ containing only the columns and rows which are nonzero, and the beamformers $\{\bm{t}_k\}_{k\in \mathcal{C}_j}$ are obtained from
\begin{equation}
\bm{t}_k=f_{\IA}\bigg(\bigg[\prod_{k=1}^K(N_k,M_k)\bigg],\mathcal{A},\mathbf{H}^{(k)}\bigg),\qquad \forall k \in \mathcal{C}_j.
\label{eq:algo_15}
\end{equation}
Note that if $\mathcal{C}_j= \emptyset$, the beamformer~$\bm{t}_j$ is simply obtained from~$\bm{t}_j=f_{\Eff}(\tilde{\mathbf{H}}^{(j)},\emptyset)$.
 
\begin{remark}
TX~$j$ computes first the TX beamformers of all the TXs which have a CSIT included in its own CSIT allocation. They belong to smaller tightly-feasible ICs and TX~$j$ has to align its interference over the interference subspace that they generate.
\qed
\end{remark}
%%%%%%%%%%%%%%%%%%%%%%%%%%%%%%%%%%%%%%%%%%%%%%%%%%%%%%
\subsubsection{Achievability of interference alignment}\label{se:Tight:Min}
%%%%%%%%%%%%%%%%%%%%%%%%%%%%%%%%%%%%%%%%%%%%%%%%%%%%%% 
We have described a precoding algorithm but it remains to prove that IA is indeed achieved.
\begin{theorem}
\label{thm_Min}
The CSIT allocation policy~$\mathcal{A}$ obtained with the incomplete CSIT allocation algorithm described above preserves IA feasibility, i.e., it holds that~$\mathcal{A}\in \mathbb{A}_{\Feas}$.%: IA is achieved by the IA algorithm described above.
\end{theorem}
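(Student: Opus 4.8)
The plan is to prove Theorem~\ref{thm_Min} by induction on the recursive structure of the algorithm $f_{\IA}$, showing that every beamformer it produces satisfies all the IA constraints \eqref{eq:SM_3}. The key observation is that the CSIT allocation algorithm of Subsection~\ref{se:Tight:Algo} assigns to each TX~$j$ a pair $(\mathcal{S}_{\RX}^{(j)},\mathcal{S}_{\TX}^{(j)})$ such that $j\in\mathcal{S}_{\TX}^{(j)}$ and the sub-IC $(\mathcal{S}_{\RX}^{(j)},\mathcal{S}_{\TX}^{(j)})$ is \emph{tightly-feasible}, i.e., \eqref{eq:Tight_3} holds --- this follows from the termination conditions of the algorithm (either the tight condition is met in step~1, or $|\mathcal{S}_{\TX}|=K$, in which case tightness follows from the fact that the whole IC is tightly-feasible and each RX-addition step that was skipped contributed at least as many equations as variables). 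I would first establish this invariant as a lemma, together with the nesting property: whenever $k\in\mathcal{C}_j$, the sub-IC of~$k$ is strictly contained in that of~$j$, so the recursion in \eqref{eq:algo_14}--\eqref{eq:algo_15} is well-founded and bottoms out at TXs with $\mathcal{C}_j=\emptyset$, which sit in minimal tightly-feasible sub-ICs.

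Next I would set up the induction. The base case is a TX~$j$ with $\mathcal{C}_j=\emptyset$: here $\bm{t}_j=f_{\Eff}(\tilde{\mathbf{H}}^{(j)},\emptyset)$ amounts to running a conventional IA algorithm on the sub-IC $(\mathcal{S}_{\RX}^{(j)},\mathcal{S}_{\TX}^{(j)})$, which is tightly-feasible, hence feasible by Theorem~\ref{thm_feasibility} (applied to the generalized IC); thus all constraints $\bm{g}_x^{\He}\mathbf{H}_{xy}\bm{t}_y=0$ for $x\in\mathcal{S}_{\RX}^{(j)}$, $y\in\mathcal{S}_{\TX}^{(j)}$, $x\neq y$ are met. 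For the inductive step, I would argue that when TX~$j$ computes $\bm{t}_j$ via $f_{\Eff}(\tilde{\mathbf{H}}^{(j)},\{\bm{t}_k\}_{k\in\mathcal{C}_j})$, the beamformers $\{\bm{t}_k\}_{k\in\mathcal{C}_j}$ are already fixed and --- crucially --- the algorithm is being run over the \emph{effective channel} in which those columns are collapsed. The point is that the effective sub-IC obtained from the tightly-feasible $(\mathcal{S}_{\RX}^{(j)},\mathcal{S}_{\TX}^{(j)})$ by fixing the beamformers of the TXs in $\mathcal{C}_j$ is still feasible: each fixed TX beamformer removes $M_k-1$ variables but the tight balance is preserved because those TXs belonged to strictly smaller tightly-feasible sub-ICs, so the variable count and equation count decrease in lockstep --- this is essentially the content of Theorem~\ref{thm_incomplete}, applied iteratively. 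Hence $f_{\Eff}$ returns beamformers satisfying all remaining IA constraints within the sub-IC, and by the induction hypothesis the fixed ones already satisfied theirs, so \emph{all} constraints indexed by $(\mathcal{S}_{\RX}^{(j)},\mathcal{S}_{\TX}^{(j)})$ hold.

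Finally I would stitch the sub-ICs together to cover the full IC. Every IA equation $\mathrm{E}_{xy}$ with $x\neq y$ must be shown to be satisfied by the global collection $\{\bm{t}_j,\bm{g}_i\}$. Observe that TX~$y$'s own allocation $(\mathcal{S}_{\RX}^{(y)},\mathcal{S}_{\TX}^{(y)})$ contains $y$, and by the ordering rules the algorithm grows $\mathcal{S}_{\RX}$ until tightness; I would verify that for each interfered RX~$x$, either $x\in\mathcal{S}_{\RX}^{(y)}$ (so the constraint was handled when $\bm{t}_y$ was computed, as above), or $x\notin\mathcal{S}_{\RX}^{(y)}$ --- but in the latter case RX~$x$ has enough antennas to zero-force TX~$y$'s stream on its own, because the RXs excluded from a tightly-feasible sub-IC are exactly the ones with surplus receive dimensions (here I would invoke the characterization of tightly-feasible settings and the sub-IC structure to count dimensions at RX~$x$). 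The consistency across overlapping sub-ICs is automatic because once a beamformer $\bm{t}_k$ is computed it is reused verbatim (it lies in $\mathcal{B}_{\TX}^{\Fix}$ for every larger sub-IC containing it), so there is no conflict. The main obstacle I anticipate is precisely this last bookkeeping step: rigorously showing that the union of the per-TX tightly-feasible sub-ICs, with beamformers computed bottom-up, covers \emph{every} interference constraint and that the leftover RXs always have adequate receive dimension --- this requires carefully combining the termination analysis of the allocation algorithm with the dimension-counting of Theorem~\ref{thm_feasibility}, and is where a naive argument could miss a case (e.g.\ a TX whose allocation terminates via $|\mathcal{S}_{\TX}|=K$ rather than via tightness).
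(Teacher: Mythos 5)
Your overall architecture --- the lemma that each allocated pair $(\mathcal{S}_{\RX}^{(j)},\mathcal{S}_{\TX}^{(j)})$ is tightly-feasible, the iterated use of Theorem~\ref{thm_incomplete} on the effective channel obtained by fixing the beamformers of the TXs in $\mathcal{C}_j$, and coherency via the nesting of CSIT allocations (each TX reproducing verbatim the beamformers of the TXs whose allocation is contained in its own) --- is essentially the paper's argument. The genuine problem is in your final stitching step, in the case $x\notin\mathcal{S}_{\RX}^{(y)}$. The claim that the RXs excluded from a tightly-feasible sub-IC are ``exactly the ones with surplus receive dimensions'' and can zero-force TX~$y$'s stream on their own is false in general. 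Take the paper's own example $[(2,3).(2,4).(3,5).(3,2).(4,2)]$: TX~$4$ is allocated $\mathbf{A}_{\{1,2\},\{4,5\}}$, so RX~$3\notin\mathcal{S}_{\RX}^{(4)}$; yet RX~$3$ has only $N_3-1=2$ zero-forcing dimensions against four interfering streams, so it cannot null TX~$4$ ``on its own'' while also rejecting TXs~$1$, $2$ and $5$. The constraint $\bm{g}_3^{\He}\mathbf{H}_{34}\bm{t}_4=0$ is satisfiable only because TXs~$1$ and~$2$ subsequently align their interference at RX~$3$ with the subspace already generated by TXs~$4$ and~$5$, inside their own larger sub-ICs, which contain both RX~$3$ and TX~$4$.

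The correct bookkeeping is therefore not a lone-RX dimension count: every constraint $\mathrm{E}_{xy}$ with $x\notin\mathcal{S}_{\RX}^{(y)}$ must reappear as a residual constraint in the effective problem solved for some TX with a strictly larger allocation (one whose RX set contains $x$, whose TX set contains $y$, and for which $\bm{t}_y$ is among the fixed beamformers), and the feasibility of these residual problems is exactly what iterating Theorem~\ref{thm_incomplete} provides --- its proof, cf.~\eqref{eq:App_proof_5}, deliberately keeps the equations that in-sub-IC TXs generate at out-of-sub-IC RXs inside the residual system. So what your proof needs is a coverage property of the allocation algorithm (every cross pair $(x,y)$ lies inside the sub-IC allocated to some TX, with nesting guaranteeing consistency of the fixed beamformers), not the surplus-antenna claim. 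A minor additional slip: your justification of tightness when the run terminates with $|\mathcal{S}_{\TX}|=K$ is reversed --- a skipped RX adds strictly \emph{more} variables than equations, not the other way around; the conclusion still holds because any RX left out at that point would force the full IC's variable count strictly above its equation count, contradicting tight feasibility, so in a tightly-feasible IC termination in fact always occurs through the tightness branch.
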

\begin{proof}  
A detailed proof is provided in Appendix~\ref{app:thm_Min}.
\end{proof}
\begin{remark} The RXs in~$\mathcal{S}_{\RX}^{(j)}$ and the TXs in~$\mathcal{S}_{\TX}^{(j)}$, as returned by the CSIT allocation algorithm, form together the smallest tightly-feasible setting containing TX~$j$. If the algorithm is modified such that the initialization is~$\mathcal{S}_{\TX}^{(j)}=\emptyset$ instead of~$\mathcal{S}_{\TX}^{(j)}=\{j\}$, the smallest tightly-feasible sub-IC is obtained and if IA is not feasible, a sub-IC where IA is not feasible is found. Hence, this algorithm can also be used to verify the IA feasibility of an antenna configuration, as described in \cite{deKerret_online}.\qed
\end{remark}

\section{Interference Alignment with Incomplete CSIT for Super-Feasible Channels}\label{se:NTight}
%%%%%%%%%%%%%%%%%%%%%%%%%%%%%%%%%%%%%%%%%%%%%%%%%%%%%%
%%%%%%%%%%%%%%%%%%%%%%%%%%%%%%%%%%%%%%%%%%%%%%%%%%%%%%
%%%%%%%%%%%%%%%%%%%%%%%%%%%%%%%%%%%%%%%%%%%%%%%%%%%%%%
%%%%%%%%%%%%%%%%%%%%%%%%%%%%%%%%%%%%%%%%%%%%%%%%%%%%%%
%%%%%%%%%%%%%%%%%%%%%%%%%%%%%%%%%%%%%%%%%%%%%%%%%%%%%%
%%%%%%%%%%%%%%%%%%%%%%%%%%%%%%%%%%%%%%%%%%%%%%%%%%%%%% 

The previous section indicates how CSIT savings can be obtained for tightly-feasible scenarios. When additional antennas are available, the intuition goes that further CSIT savings should be possible at no cost in terms of IA feasibility. We now investigate this question. 

A distinct feature of super-feasible settings is that there must exist a corresponding tightly-feasible setting that can be obtained by keeping all TXs and RXs and simply ignoring certain antennas among the overall antenna set. Clearly, there are generally multiple ways for arriving at a tightly-feasible setting from a super-feasible one. Depending on the choice of which antennas are ignored in the initial super-feasible setting, the obtained tightly-feasible settings will satisfy particular CSIT requirements. %Below, we show how trying to obtain the tightly feasible setting with smalled CSIT allocation in fact leads to the smalled CSIT allocation for the  initial super-feasible setting as well.

As a consequence, instead of considering directly optimization problem~\eqref{eq:SM_18}, we consider the following optimization problem :
\begin{equation}
\begin{aligned}
\mathcal{A}=\argmin_{\mathcal{A}\in \mathbb{A}} & \min_{\prod_{k=1}^K(N_k',M_k')}\Size(\mathcal{A}) \quad \\
&\text{s.t. $f_{\Feas}\bigg(\mathcal{A},\bigg[\prod_{k=1}^K(N_k',M_k')\bigg]\bigg) =1$} \\
&\text{s.t. $\sum_{i=1}^K  M_i'+N_i'=(K+1)K	$}\\
&\text{s.t. $1\leq M_i'\leq M_i$ and $1\leq N_i'\leq N_i$}.
\end{aligned}
\label{eq:NTight_1} 
\end{equation}  
The problem of finding the minimal CSIT allocation has been reduced to finding the tightly-feasible setting (containing all the users) included in the full super-feasible setting, which requires the smallest CSIT allocation. Since a CSIT allocation algorithm has been derived for tightly-feasible settings, it remains only to determine which RXs or TXs should not fully exploit their antennas to ZF interference dimensions, i.e., where some antennas should be ``removed" in terms of IA feasibility.

\begin{remark}
Practically, the antennas are not removed but some precoding dimensions are used for another purpose than aligning interference inside the IC (e.g., reducing interference to other RXs, increasing signal power, diversity, etc...). As an example, we will now show how it can be used to increase the received signal power. Intuitively, we select the precoding subspace of dimension $n$ with $n<M_i$ which provides the largest received power to the RX. As a consequence, the quality of the direct channel is improved. Let us write the singular value decomposition of $\mathbf{H}_{ii}\in \mathbb{C}^{N_i\times M_i}$ as $\mathbf{H}_{ii}=\mathbf{U}_i\bm{\Sigma}_i\mathbf{V}_i^{\He}$ with $\mathbf{V}_i=[\bm{v}_1,\ldots,\bm{v}_{M_i}]\in \mathbb{C}^{M_i\times M_i}$ and $\mathbf{U}_i=[\bm{u}_1,\ldots,\bm{u}_{N_i}]\in \mathbb{C}^{N_i\times N_i}$ being two unitary matrices and $\bm{\Sigma}_i=\diag(\sigma_1,\ldots,\sigma_{\min(M_i, N_i)},0,\ldots,0)$. We set $\bm{t}_i=[\bm{v}_1,\ldots,\bm{v}_{n}]\bm{t}'_i$ with $\bm{t}'_i\in \mathbb{C}^{n\times 1}$ such that the dimension of the precoding subspace is reduced from $M_i$ to $n$. However, the vectors~$\bm{v}_1,\ldots,\bm{v}_n$ span the subspace of dimension $n$ with the largest power. Altogether, the number of dimensions available for ZF precoding is reduced by one, which is equivalent in terms of IA feasibility to removing one antenna, while the quality of the direct channel is improved. \footnote{Note that this step can be applied similarly on the RX side and that this process on the TX side requires the CSI relative to the direct channel.}\qed
\end{remark}
The considered optimization problem is combinatorial in the total number of TXs and RXs which makes exhaustive search only practical for small settings. As a consequence, we provide in the following a CSIT allocation policy exploiting heuristically the additional antennas available to reduce the size of the CSIT allocation. The heuristic behind the algorithm comes from the insight gained in the analysis of tightly-feasible settings that the more heterogeneous the antenna configuration is, the smaller the size of the CSIT allocation becomes. Intuitively, our algorithm ``removes" the antennas so as to form the ``most heterogeneous" antenna configuration where IA remains feasible.

%%%%%%%%%%%%%%%%%%%%%%%%%%%%%%%%%%%%%%%%%%%%%%%%%%%%%%
%%%%%%%%%%%%%%%%%%%%%%%%%%%%%%%%%%%%%%%%%%%%%%%%%%%%%%
\subsection{CSIT Allocation Algorithm}\label{se:CSI:algo} % as output as well as the CSIT allocation~$\mathcal{A}
%%%%%%%%%%%%%%%%%%%%%%%%%%%%%%%%%%%%%%%%%%%%%%%%%%%%%%
%%%%%%%%%%%%%%%%%%%%%%%%%%%%%%%%%%%%%%%%%%%%%%%%%%%%%%

We consider in the following an heterogeneous IC and we denote by~$S$ the total number of additional antennas in the sense that~$S$ is defined as
\begin{equation}
S\triangleq\sum_{i=1}^K M_i+N_i-(K+1)K.
\label{eq:NTight_9}
\end{equation}

The following algorithm will provide the pair of sets $\mathcal{S}^{\NT}=(\mathcal{S}^{\NT}_{\RX},\mathcal{S}^{\NT}_{\TX})$ containing respectively the RXs and the TXs where the additional antennas should be ``removed". Once these antennas have been removed, the incomplete CSIT allocation policy for tightly-feasible settings described in Section~\ref{se:Tight} can be applied to obtain the incomplete CSIT allocation. Note that we need to ensure that IA feasibility is preserved by the removing of the antennas. The algorithm relies on the same approach as the algorithm for tightly-feasible setting. For the sake of brevity, we will therefore only present briefly the main steps of the algorithm. The detailed description can be found online in \cite{deKerret_online} along with the MATLAB code.

\textbf{Initialization: } We start with the initialization~$\mathcal{S}^{\NT}=(\mathcal{S}^{\NT}_{\RX},\mathcal{S}^{\NT}_{\TX})=\{\emptyset,\emptyset\}$. 

\textbf{Step~$n$: } We introduce the antenna configuration with the antennas already ``removed" as $[\prod_{i=1}^K(N_i',M_i')]$. It is initialized equal to $[\prod_{i=1}^K(N_i,M_i)]$ and updated such that
\begin{equation}
\begin{aligned}
N'_i&=N'_i-1,&\qquad\forall i\in \mathcal{S}_{\RX}^{\NT},\\
M'_i&=M_i'-1,&\qquad\forall i\in \mathcal{S}_{\TX}^{\NT}.
\end{aligned}
\end{equation}
\begin{enumerate}
\item In~$[\prod_{i=1}^K(N_i',M_i')]$, we find the set of TXs and the set of RXs, denoted by~$(\mathcal{S}^{\mathrm{Tight}}_{\RX}(n),\mathcal{S}^{\mathrm{Tight}}_{\TX}(n))$, containing the TXs and the RXs which belong to at least one tightly-feasible sub-IC.
\item We add to the set $\mathcal{S}^{\NT}_{\TX}$ the TX with the smallest number of antenna in the set ~$\mathcal{K}\setminus \mathcal{S}^{\mathrm{Tight}}_{\TX}(n)$. If the set~$\mathcal{K}\setminus \mathcal{S}^{\mathrm{Tight}}_{\TX}(n)$ is empty, we instead add to the set $\mathcal{S}^{\NT}_{\RX}$ the RX with the smallest number of antenna in the set~$\mathcal{K}\setminus \mathcal{S}^{\mathrm{Tight}}_{\RX}(n)$.
\end{enumerate}

\textbf{Discussion:} Procedure~$1)$ selects all the nodes where it is \emph{not} possible to remove one antenna without making IA unfeasible. It is very similar to the algorithm described in Section~\ref{se:Tight}. Procedure $2)$ decides at which node the antenna should be removed. We have proposed here one heuristic policy but other heuristic policies could as well be chosen.
 
%%%%%%%%%%%%%%%%%%%%%%%%%%%%%%%%%%%%%%%%%%%%%%%%%%%%%%
%%%%%%%%%%%%%%%%%%%%%%%%%%%%%%%%%%%%%%%%%%%%%%%%%%%%%%
\subsection{Toy-Example of the Incomplete CSIT-Algorithm in Super-Feasible Settings}\label{se:CSI:algo} 
%%%%%%%%%%%%%%%%%%%%%%%%%%%%%%%%%%%%%%%%%%%%%%%%%%%%%%
%%%%%%%%%%%%%%%%%%%%%%%%%%%%%%%%%%%%%%%%%%%%%%%%%%%%%%
Let us consider the $[(2,2).(3,2).(2,3)]$~IC. It can be easily verified to be a feasible IA setting. Furthermore, it contains two additional antennas since $\sum_{i=1}^KN_i+M_i-K(K+1)=2$. We will now go through the steps of our CSIT allocation algorithm for super-feasible ICs. 
\begin{itemize}
\item $n=1$: During phase $1)$, it is found that there is not any tightly-feasible set. Thus, one antenna can be removed at any node during phase $2)$. Hence, one antenna is removed at TX~$1$. 
\item $n=2$: The $[\prod_{i=1}^K(N_i',M_i')]$~IC is then equal to $[(2,1).(3,2).(2,3)]$. During phase~$1)$, the set of TXs belonging to a tightly-feasible IC is found to be~$\mathcal{S}^{\mathrm{Tight}}_{\TX}(n)=\{1,2\}$ and the set of RXs to be~$\mathcal{S}^{\mathrm{Tight}}_{\RX}(n)=\{1,3\}$. Hence, one antenna can be removed during phase $2)$ at TX~$3$ while preserving IA feasibility.
\end{itemize}

The CSIT allocation algorithm leads to remove one antenna at TX~$1$ and one antenna at TX~$3$ to obtain the antenna configuration~$[(2,1).(3,2).(2,2)]$. This setting being tightly-feasible, we can run the CSIT allocation for tightly-feasible ICs described in Subsection~\ref{se:Tight:Algo} which returns the CSIT allocation
\begin{equation*} 
\mathcal{A}=\{\mathbf{A}^{(1)}\!=\!\mathbf{A}_{\emptyset,\emptyset},\mathbf{A}^{(2)}\!=\!\mathbf{A}_{\{3\},\{1,2\}},\mathbf{A}^{(3)}\!=\!\mathbf{A}_{\{1,3\},\{1,2,3\}}\}.
\label{eq:Ex_NT_1}
\end{equation*}% of the original setting~$(2,2).(3,2).(2,3)$ 
The size of the CSIT allocation in~\eqref{eq:Ex_NT_1} is equal to~$20$ while the complete CSIT allocation has a size of~$99$. Thus, the additional antennas have been used to reduce the feedback size by practically a factor of $4$.

%%%%%%%%%%%%%%%%%%%%%%%%%%%%%%%%%%%%%%%%%%%%%%%%%%%%%%
%%%%%%%%%%%%%%%%%%%%%%%%%%%%%%%%%%%%%%%%%%%%%%%%%%%%%%
%%%%%%%%%%%%%%%%%%%%%%%%%%%%%%%%%%%%%%%%%%%%%%%%%%%%%%
%%%%%%%%%%%%%%%%%%%%%%%%%%%%%%%%%%%%%%%%%%%%%%%%%%%%%%
\section{Simulations}
%%%%%%%%%%%%%%%%%%%%%%%%%%%%%%%%%%%%%%%%%%%%%%%%%%%%%%
%%%%%%%%%%%%%%%%%%%%%%%%%%%%%%%%%%%%%%%%%%%%%%%%%%%%%%
%%%%%%%%%%%%%%%%%%%%%%%%%%%%%%%%%%%%%%%%%%%%%%%%%%%%%%
%%%%%%%%%%%%%%%%%%%%%%%%%%%%%%%%%%%%%%%%%%%%%%%%%%%%%% 

%%%%%%%%%%%%%%%%%%%%%%%%%%%%%%%%%%%%%%%%%%%%%%%%%%%%%%
\subsection{Tightly-Feasible Setting}
%%%%%%%%%%%%%%%%%%%%%%%%%%%%%%%%%%%%%%%%%%%%%%%%%%%%%%
We start by verifying by simulations that IA is indeed achieved by our new IA algorithm. We consider for the simulations the $[(2,3).(2,4).(3,5).(3,2).(4,2)]$~IC, which has been studied in the example in Subsection~\ref{se:Tight:Ex}. This example has been chosen to illustrate our approach, but the CSIT reduction is different for each antenna configuration such that it is also relevant to consider the average reduction over all the antenna configurations. This will be the focus of the next subsection.
\begin{figure}%[htp!] 
\centering
\includegraphics[width=1\columnwidth]{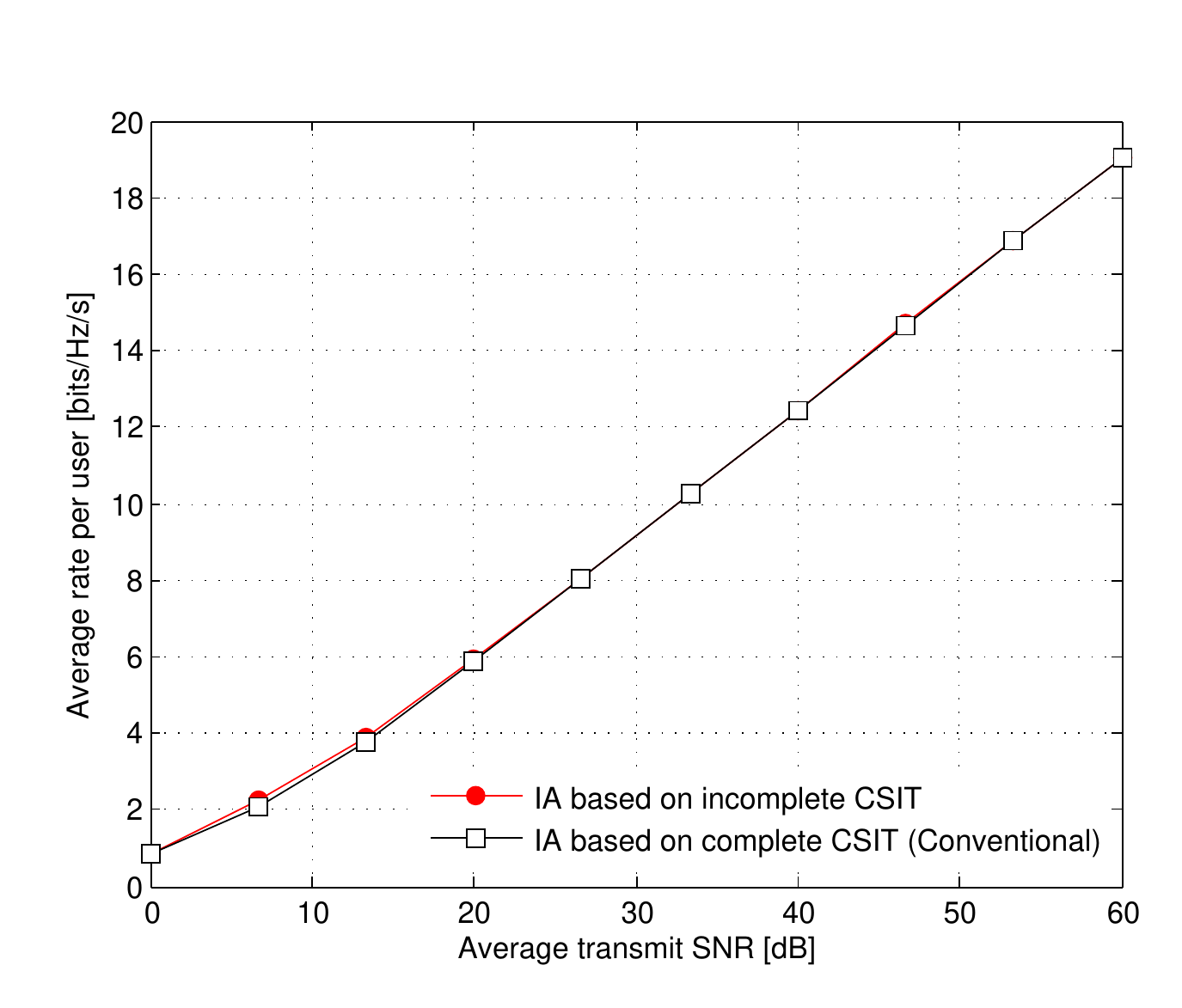}
\caption{Average rate per user in terms of the normalized TX power for the tightly-feasible $[(2,3).(2,4).(3,5).(3,2).(4,2)]$~IC.}
\label{Rate_K5}
\end{figure}
We show in Fig.~\ref{Rate_K5} the average rate per user achieved in terms of the SNR. We compare then our IA algorithm based on incomplete CSIT to the min-leakage IA algorithm based on complete CSIT (See Appendix~\ref{app:minLeak}). Our algorithm achieves virtually the same performance as the min-leakage algorithm. Hence, the reduction of $60\%$ of the feedback size (Cf. Subsection~\ref{se:Tight:Ex}) comes for ``free", making it especially interesting in practice. 

%This is an expected result since the min-leakage algorithm converges to one IA solution randomly chosen among the possible solutions and the set of possible solutions remains the same with the proposed CSIT allocation algorithm. 

%In fact, the proposed IA algorithm tends even to outperform the conventional min-leakage algorithm. This is understood as being a consequence of the better convergence of the algorithm with incomplete CSIT. Indeed, in the IA algorithm with incomplete CSIT, the optimization of the beamformers is broken into smaller successive IA steps instead of updating iteratively all the beamformers. %This reduces the complexity of the algorithm and improves the convergence behavior.

%%%%%%%%%%%%%%%%%%%%%%%%%%%%%%%%%%%%%%%%%%%%%%%%%%%%%%
\subsection{Performance Evaluation of the CSIT allocation Algorithm}\label{se:CSI:algo} 
%%%%%%%%%%%%%%%%%%%%%%%%%%%%%%%%%%%%%%%%%%%%%%%%%%%%%%

\begin{figure}%[htp!] 
\centering
\includegraphics[width=1\columnwidth]{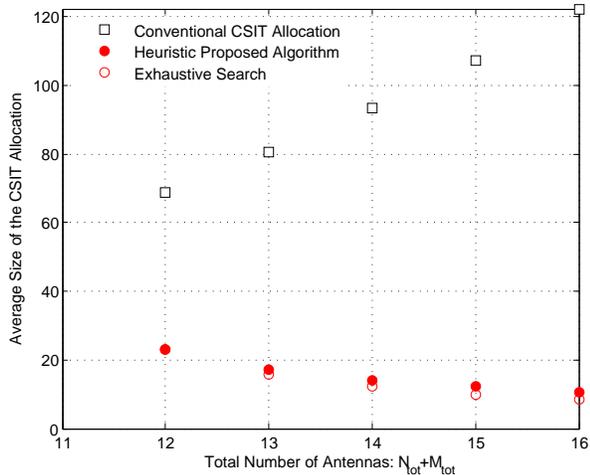}
\caption{Average CSIT allocation size in terms of the number of antennas distributed across the TXs and the RXs for $K=3$~users.}
\label{Feedback_size_K3}
\end{figure}

We will now evaluate the feedback reduction obtained with our CSIT allocation policy in super-feasible settings. Since this gain depends on the antenna configuration, we show in Fig.~\ref{Feedback_size_K3} the average size of the CSIT allocation for $K=3$~users when the antennas are allocated at random to the TXs and the RXs according to the uniform distribution. Note that the antenna configurations obtained can make IA unfeasible. When this is the case, we redistribute the antennas until a feasible antenna configuration is obtained. 

We average over $1000$~antenna configurations and the proposed heuristic CSIT allocation policy is compared with the exhaustive search. The exhaustive search consists in testing all the possibilities for removing the additional antennas\footnote{Note that a true exhaustive search through all the possible CSIT allocations (i.e., coming back to the original optimization problem \eqref{eq:SM_18}) is too complex even for trivial antenna configurations.}. For reference, we also show the average size of the complete (conventional) CSIT allocation. We consider only $K=3$~users because of the exponential complexity of the exhaustive search.

% Indeed, it requires testing for every TX the allocation of every channel coefficient, whose number of possibilities is much larger than the number of sub-ICs included in the full IC. For $M=N$, the total number of possibilities is $((K+1)/2)^2K^2$, and for $K=2$ and $M=N=2$ it gives already $108$ possibilities.

If the aggregate number of antennas is strictly smaller than~$K(K+1)=12$, it follows from Theorem~\ref{thm_feasibility} that IA cannot be feasible. The result obtained with $12$ antennas distributed between all the nodes corresponds then to the tightly-feasible case. It is hence possible to observe how our CSIT allocation algorithm leads to a significant reduction of the required CSIT without requiring any additional antenna. If more than $12$~antennas are available, each additional antenna is exploited by the heuristic algorithm to reduce the size of the CSIT allocation. This algorithm brings a reduction of the CSIT size which is only slightly smaller than the reduction brought by exhaustive search, but has a polynomial complexity. Note that allocating the antennas uniformly at random represents a worst case as it tends to generate homogeneous settings as the number of antennas increases.

%\begin{figure}%[htp!] 
%\centering
%\includegraphics[width=1\columnwidth]{Per_Feas_K3}
%\caption{Percentage of feasible antenna configurations in terms of the number of antennas shared between the TXs and the RXs for $K=3$~users.}
%\label{Per_Feas_K3}
%\end{figure}
%Each additional antenna reduces the size of the CSIT allocation but also increases the probability of IA to be feasible. In fact, for every total number of antenna shared between the TXs and the RXs there is a probability that IA is unfeasible (e.g., if all the antennas have been added to the same TX). In Fig.~\ref{Feedback_size_K3}, only the feasible settings have been kept. They correspond to a percentage of the possible configurations given in Fig.~\ref{Per_Feas_K3}.

%%%%%%%%%%%%%%%%%%%%%%%%%%%%%%%%%%%%%%%%%%%%%%%%%%%%%%
%%%%%%%%%%%%%%%%%%%%%%%%%%%%%%%%%%%%%%%%%%%%%%%%%%%%%%
%%%%%%%%%%%%%%%%%%%%%%%%%%%%%%%%%%%%%%%%%%%%%%%%%%%%%%
%%%%%%%%%%%%%%%%%%%%%%%%%%%%%%%%%%%%%%%%%%%%%%%%%%%%%%
\section{Discussion}
%%%%%%%%%%%%%%%%%%%%%%%%%%%%%%%%%%%%%%%%%%%%%%%%%%%%%%
%%%%%%%%%%%%%%%%%%%%%%%%%%%%%%%%%%%%%%%%%%%%%%%%%%%%%%
%%%%%%%%%%%%%%%%%%%%%%%%%%%%%%%%%%%%%%%%%%%%%%%%%%%%%%
%%%%%%%%%%%%%%%%%%%%%%%%%%%%%%%%%%%%%%%%%%%%%%%%%%%%%%
IA feasibility is studied in the literature under the assumption of full CSIT sharing. In contrast, the relation between IA feasibility and CSIT allocation is investigated in this work. Specifically, it is shown how IA can be achieved in some cases without full CSIT sharing. When extra-antennas are available, the existence of a trade-off between the number of antennas available and the CSIT sharing requirements is shown. Our approach brings a significant reduction of the feedback size while introducing no losses in terms of DoF compared to the conventional IA algorithm with full CSIT sharing. 

Furthermore, IA with incomplete CSIT sharing raises additional interesting open problems that go beyond the scope of this paper. Firstly, proving the minimality of our reduced CSIT allocation (or finding the minimal CSIT allocation policy) could not be achieved due to the difficulty in deriving a lower bound for the minimal size of a CSIT allocation preserving IA feasibility. Another interesting problem is to extend the study to multiple streams transmissions. Verifying IA feasibility represents already a difficult problem in this case so that the derivation of analytical results is challenging. However, the main idea behind our CSIT allocation algorithm directly extends to the general case with multiple-stream transmissions.

Finally, the analysis has been carried out by considering the DoF which models the performance at asymptotically high SNR. At low to medium SNR, it is expected that CSIT incompleteness will lead to some rate loss as beamforming capabilities are reduced. Thus, an interesting problem lies in the trade-off between CSIT sharing reduction and finite SNR rate performance. %Such problems should be investigated in the future to translate the promising feedback reduction achieved in this work into savings in practical systems.  

%%%%%%%%%%%%%%%%%%%%%%%%%%%%%%%%%%%%%%%%%%%%%%%%%%%%%%%%%%%%%%%%%%%%%%%%%%%%%%%%%%%%%%%%%%%%%%%%%%
%%%%%%%%%%%%%%%%%%%%%%%%%%%%%%%%%%%%%%%%%%%%%%%%%%%%%%%%%%%%%%%%%%%%%%%%%%%%%%%%%%%%%%%%%%%%%%%%%%
%%%%%%%%%%%%%%%%%%%%%%%%%%%%%%%%%%%%%%%%%%%%%%%%%%%%%%%%%%%%%%%%%%%%%%%%%%%%%%%%%%%%%%%%%%%%%%%%%%
%%%%%%%%%%%%%%%%%%%%%%%%%%%%%%%%%%%%%%%%%%%%%%%%%%%%%%%%%%%%%%%%%%%%%%%%%%%%%%%%%%%%%%%%%%%%%%%%%%
\appendix
%%%%%%%%%%%%%%%%%%%%%%%%%%%%%%%%%%%%%%%%%%%%%%%%%%%%%%%%%%%%%%%%%%%%%%%%%%%%%%%%%%%%%%%%%%%%%%%%%%
%%%%%%%%%%%%%%%%%%%%%%%%%%%%%%%%%%%%%%%%%%%%%%%%%%%%%%%%%%%%%%%%%%%%%%%%%%%%%%%%%%%%%%%%%%%%%%%%%%
%%%%%%%%%%%%%%%%%%%%%%%%%%%%%%%%%%%%%%%%%%%%%%%%%%%%%%%%%%%%%%%%%%%%%%%%%%%%%%%%%%%%%%%%%%%%%%%%%%
%%%%%%%%%%%%%%%%%%%%%%%%%%%%%%%%%%%%%%%%%%%%%%%%%%%%%%%%%%%%%%%%%%%%%%%%%%%%%%%%%%%%%%%%%%%%%%%%%%

%%%%%%%%%%%%%%%%%%%%%%%%%%%%%%%%%%%%%%%%%%%%%%%%%%%%%%
\subsection{Minimum Leakage Interference Algorithm}\label{app:minLeak}
%%%%%%%%%%%%%%%%%%%%%%%%%%%%%%%%%%%%%%%%%%%%%%%%%%%%%%

Many IA algorithms are already available in the literature~\cite{Gomadam2008,Schmidt2009,Kumar2010,Santamaria2010,Papailiopoulos2010,Peters2012} and each of them aims at maximizing the performance at finite SNR while converging to an IA solution at high SNR. The aim of this work being to study the feasibility of IA and not to improve on the performance of IA algorithm at finite SNR, we will use for the simulations the \emph{minimum (min-) leakage algorithm} from \cite{Gomadam2008}. It has the advantage of not requiring the knowledge of the direct channel but only the CSI required for fulfilling the IA constraints, i.e., the interfering channels.

The min-leakage algorithm can be described in our setting as follows. In a $K$-user IC, the algorithm minimizes the sum of the interference power created at the RXs which is called~$\mathrm{I}_{\IA}$ and is equal to
\begin{equation}
\mathrm{I}_{\IA}\triangleq\sum_{i=1}^K\sum_{k=1,k\neq i}^K |\bm{g}_i^{\He}\mathbf{H}_{ik}\bm{t}_k|^2.
\label{eq:ML_1}
\end{equation}
The algorithm is based on an alternating minimization in which the TX beamformers are first obtained from the RX beamformers as
\begin{equation}
\bm{t}_k=\text{eig}_{\min}\left(\sum_{i=1,i\neq k}^K \mathbf{H}_{ik}^{\He}\bm{g}_i\bm{g}_i^{\He}\mathbf{H}_{ik}\right),\qquad \forall k\in \mathcal{K}.
\label{eq:ML_2}
\end{equation}
Similarly, the RX beamformers at all RXs are then obtained from the TX beamformers as
\begin{equation}
\bm{g}_k=\text{eig}_{\min}\left(\sum_{i=1,i\neq k}^K \mathbf{H}_{ki}\bm{t}_i\bm{t}_i^{\He}\mathbf{H}_{ki}^{\He}\right),\qquad \forall k\in \mathcal{K}.
\label{eq:ML_3}
\end{equation}
The beamformers are updated iteratively until convergence to a local minimizer of~$\mathrm{I}_{\IA}$. %Even though convergence to an IA solution when IA is feasible has not been proved, this convergence property has been verified by simulations in a wide range of settings and is widely believed to hold true for all antenna configurations. 

%%%%%%%%%%%%%%%%%%%%%%%%%%%%%%%%%%%%%%%%%%%%%%%%%%%%%%
\subsection{Proof of Theorem~\ref{thm_feasibility}}\label{app:feasibility} 
%%%%%%%%%%%%%%%%%%%%%%%%%%%%%%%%%%%%%%%%%%%%%%%%%%%%%%
For $\mathcal{I}\subseteq \mathcal{J}=\{(i,j)|1\leq i,j\leq K, i\neq j\}$, we define the sets
\begin{equation}
~\begin{aligned}
\mathcal{S}_{\TX}(\mathcal{I})&\triangleq \{j|\exists k', (k',j)\in \mathcal{I}\} ,\\
\mathcal{S}_{\RX}(\mathcal{I})&\triangleq \{k|\exists j', (k,j')\in \mathcal{I}\}.
\end{aligned}
\label{eq:SM_6}
\end{equation}
Hence, $\mathcal{S}_{\RX}(\mathcal{I})$ and $\mathcal{S}_{\TX}(\mathcal{I})$ contain respectively the set of RXs and the set of TXs appearing in at least one equation of the set of equations~$\mathcal{I}$. With these notations, equation \eqref{eq:theorem_feasibility} can be rewritten as
\begin{equation}
|\mathcal{I}|\!\leq  \sum_{k\in \mathcal{S}_{\TX}(\mathcal{I})}\!\!(M_k-1) +\!\sum_{j\in\mathcal{S}_{\RX}(\mathcal{I})}\!\!(N_j-1),\qquad\forall \mathcal{I}\subseteq \mathcal{J}.
\label{eq:SM_7}
\end{equation}
Adding equations to~$\mathcal{I}$ without increasing~$\mathcal{S}_{\RX}(\mathcal{I})$ or~$\mathcal{S}_{\TX}(\mathcal{I})$ makes condition \eqref{eq:SM_7} tighter. Hence, it is only necessary to satisfy \eqref{eq:SM_7} for the sets of equations made of all the equations generated by the RXs in~$\mathcal{S}_{\RX}(\mathcal{I})$ and the TXs in~$\mathcal{S}_{\TX}(\mathcal{I})$, which is exactly the result of the theorem.

%%%%%%%%%%%%%%%%%%%%%%%%%%%%%%%%%%%%%%%%%%%%%%%%%%%%%%
\subsection{Proof of Theorem~\ref{thm_incomplete}}\label{app:thm_incomplete} 
%%%%%%%%%%%%%%%%%%%%%%%%%%%%%%%%%%%%%%%%%%%%%%%%%%%%%%
\begin{proof}
We have by assumption that
\begin{equation}
\mathcal{N}_{\Var}(\mathcal{S}_{\RX},\mathcal{S}_{\TX})=\mathcal{N}_{\Eq}(\mathcal{S}_{\RX},\mathcal{S}_{\TX}).
\label{eq:App_proof_1}
\end{equation}
This setting being tightly-feasible, it is possible to align interference inside this sub-IC. In the following we assume that the beamformers of the TXs and the RXs inside this sub-IC have been fixed and fulfill all IA constraints inside this sub-IC. In terms of IA feasibility, this is equivalent to replacing the TXs and the RXs inside this sub-IC by non-interfering single-antenna nodes. Indeed, these nodes do not have any ZF capabilities left but do not create any IA constraints among themselves.

We will now show that IA remains feasible in the IC obtained once these beamformers have been fixed. Since the initial IC was feasible, it holds for any subset of TX~$\mathcal{S}_{\TX}'$ and any subset of RX~$\mathcal{S}_{\RX}'$ that
\begin{equation}
\mathcal{N}_{\Var}(\mathcal{S}_{\RX}',\mathcal{S}_{\TX}')\geq \mathcal{N}_{\Eq}(\mathcal{S}_{\RX}',\mathcal{S}_{\TX}').
\label{eq:App_proof_2}
\end{equation}
A fortiori, it holds for $\mathcal{S}_{\RX}\cap\mathcal{S}_{\RX}'=\emptyset$ and $\mathcal{S}_{\TX}\cap\mathcal{S}_{\TX}'=\emptyset$ that
\begin{equation}
\mathcal{N}_{\Var}(\mathcal{S}_{\RX}\cup\mathcal{S}_{\RX}',\mathcal{S}_{\TX}\cup\mathcal{S}_{\TX}')\geq \mathcal{N}_{\Eq}(\mathcal{S}_{\RX}\cup\mathcal{S}_{\RX}',\mathcal{S}_{\TX}\cup\mathcal{S}_{\TX}').
\label{eq:App_proof_3}
\end{equation}
It follows easily from the definition of $\mathcal{N}_{\Var}$ and $\mathcal{N}_{\Eq}$ in \eqref{eq:SM_8}, that~$\forall \mathcal{A},\mathcal{A}',\mathcal{B},\mathcal{B}'\subset \mathcal{K}$, with $\mathcal{A}'\cap \mathcal{A}=\emptyset$ and $\mathcal{B}'\cap \mathcal{B}=\emptyset$, it holds that
\begin{equation}
\begin{aligned}
&\mathcal{N}_{\Var}(\mathcal{A}\cup \mathcal{A}',\mathcal{B}\cup \mathcal{B}')=\mathcal{N}_{\Var}(\mathcal{A},\mathcal{B})+\mathcal{N}_{\Var}(\mathcal{A}', \mathcal{B}')\\
&\mathcal{N}_{\Eq}(\mathcal{A}\cup \mathcal{A}',\mathcal{B}\cup \mathcal{B}')=\mathcal{N}_{\Eq}(\mathcal{A},\mathcal{B})+\mathcal{N}_{\Eq}(\mathcal{A}', \mathcal{B})\\
&~~~~~~~~~~~~+\mathcal{N}_{\Eq}(\mathcal{A}, \mathcal{B}')+\mathcal{N}_{\Eq}(\mathcal{A}', \mathcal{B}').
\end{aligned}
\label{eq:App_proof_4}
\end{equation}
Applying the relations in \eqref{eq:App_proof_4} to rewrite \eqref{eq:App_proof_3} and using also \eqref{eq:App_proof_2} gives
\begin{equation}
\begin{aligned}
\mathcal{N}_{\Var}(\mathcal{S}_{\RX}',\mathcal{S}_{\TX}')&\geq \mathcal{N}_{\Eq}(\mathcal{S}_{\RX},\mathcal{S}_{\TX}')\\
&\!\!+\mathcal{N}_{\Eq}(\mathcal{S}_{\RX}',\mathcal{S}_{\TX})+\mathcal{N}_{\Eq}(\mathcal{S}_{\RX}', \mathcal{S}_{\TX}').
\end{aligned}
\label{eq:App_proof_5}
\end{equation}
The relation \eqref{eq:App_proof_5} describes exactly all the feasibility conditions in the IC obtained once the beamformers inside the sub-IC containing the RXs in $\mathcal{S}_{\RX}$ and the TXs in $\mathcal{S}_{\TX}$ have been fixed. This shows that IA remains feasible and concludes the proof.
\end{proof}

%%%%%%%%%%%%%%%%%%%%%%%%%%%%%%%%%%%%%%%%%%%%%%%%%%%%%%
\subsection{Proof of Theorem~\ref{thm_Min}}\label{app:thm_Min}
%%%%%%%%%%%%%%%%%%%%%%%%%%%%%%%%%%%%%%%%%%%%%%%%%%%%%%
\begin{proof}
Let us consider w.l.o.g. the precoding at TX~$j$. By construction, TX~$j$ is allocated with the CSI relative to the sub-IC formed by the pair of sets~$(\mathcal{S}_{\RX}^{(j)},\mathcal{S}_{\TX}^{(j)})$, which is \emph{tightly-feasible}. We have shown in Appendix~\ref{app:thm_incomplete} that setting the beamformers in a tightly-feasible sub-IC to align interference in this sub-IC, does not reduce the feasibility of IA in the full IC. Thus, if all the TXs included in~$\mathcal{S}_{\TX}^{(j)}$ would design jointly their beamformers with the other TXs adapting to these TX beamformers, IA feasibility would then be preserved. Yet, all the TXs in~$\mathcal{S}_{\TX}^{(j)}$ do not necessarily share the same CSIT and thereby cannot necessarily design jointly the beamformers. Thus, it remains to prove that all the TXs included in~$\mathcal{S}_{\TX}^{(j)}$ design their beamformers in such a way that IA is achieved inside this sub-IC.

By inspection of the CSIT allocation algorithm, the CSIT allocations of all the TXs contained in~$\mathcal{S}_{\TX}^{(j)}$ are included in the CSIT of TX~$j$. Thus, TX~$j$ reproduces the precoding which is done at these TXs to obtain their TX beamformers and then align over the interference subspaces generated. This ensures the coherency between the beamformers of all the TXs in~$\mathcal{S}_{\TX}^{(j)}$ so that IA is achieved.
\end{proof}

\bibliographystyle{IEEEtran}
%\bibliography{./../../Literatur}
% Generated by IEEEtran.bst, version: 1.13 (2008/09/30)

  \begin{IEEEbiography}[{\includegraphics[scale=0.5]{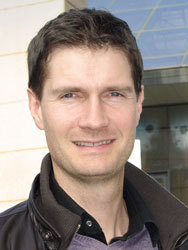}}]{David Gesbert} (IEEE Fellow) is Professor and Head of the Mobile Communications Department, EURECOM, France, where he also heads the Communications Theory Group. He obtained the Ph.D degree from Ecole Nationale Superieure des Telecommunications, France, in 1997. From 1997 to 1999 he has been with the Information Systems Laboratory, Stanford University. In 1999, he was a founding engineer of Iospan Wireless Inc, San Jose, Ca.,a startup company pioneering MIMO-OFDM (now Intel). Between 2001 and 2003 he has been with the Department of Informatics, University of Oslo as an adjunct professor. D. Gesbert has published over 200 papers and several patents all in the area of signal processing, communications, and wireless networks.

 D. Gesbert was a co-editor of several special issues on wireless networks and communications theory, for JSAC (2003, 2007, 2009), EURASIP Journal on Applied Signal Processing (2004, 2007), Wireless Communications Magazine (2006). He served on the IEEE Signal Processing for Communications Technical Committee, 2003-2008.  He was an associate editor for IEEE Transactions on Wireless Communications and the EURASIP Journal on Wireless Communications and Networking. He authored or co-authored papers winning the 2012 SPS Signal Processing Magazine Best Paper Award, 2004 IEEE Best Tutorial Paper Award (Communications Society), 2005 Young Author Best Paper Award for Signal Proc. Society journals, and paper awards at conferences 2011 IEEE SPAWC, 2004 ACM MSWiM workshop. He co-authored the book Space time wireless communications: From parameter estimation to MIMO systems, Cambridge Press, 2006.  In 2013, he was a General Chair for the IEEE Communications Theory Workshop, and a Technical Program Chair for the Communications Theory Symposium of ICC2013. He is a Technical Program Chair for IEEE ICC 2017, to be held in Paris.

\end{IEEEbiography}
 \smallskip{} 
  \begin{IEEEbiography}[{\includegraphics[scale=0.4]{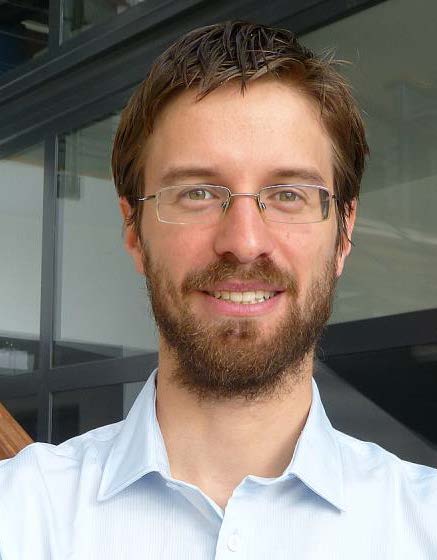}}]{Paul de Kerret} (IEEE Student Member) graduated in 2009 from Ecole Nationale Superieure des Telecommunications de Bretagne, France and obtained a diploma degree in electrical engineering from Munich University of Technology (TUM), Germany. He also earned a four year degree in mathematics at the Universite de Bretagne Occidentale, France in 2008. From January 2010 to september 2010, he has been a research assistant at the Institute for Theoretical Information Technology, RWTH Aachen University, Germany. In december 2013, he obtained a Ph.D. degree in the Mobile Communications Department at EURECOM, France, under the supervision of David Gesbert. He is the first author of several journal papers in prestigious journals and a magazine. With David Gesbert, he will give a tutorial in ICASSP 2014 on the challenges behind the cooperation of transmitters in wireless networks. He is interested in communication theory, information theory, and game theory.
\end{IEEEbiography}
\end{document}